\newtheorem{lemma}{Lemma}
\newcommand{\vertiii}[1]{{\left\vert\kern-0.25ex\left\vert\kern-0.25ex\left\vert #1
    \right\vert\kern-0.25ex\right\vert\kern-0.25ex\right\vert}}
\newtheorem{theorem}{Theorem}
\newtheorem{remark}{Remark}
\newcommand{\blind}{1}
\begin{document}

\def\spacingset#1{\renewcommand{\baselinestretch}%
{#1}\small\normalsize} \spacingset{1}

\if1\blind
{
  \title{\bf  Subgroup Analysis for Longitudinal data via Semiparametric Additive Mixed Effect Model}
  \author{Xiaolin Bo, 
    Weiping Zhang 
    \\
    Department of Statistics and Finance, \\University of Science and Technology of China\\
    \date{}
}
  \maketitle
} \fi

\if0\blind
{
  \bigskip
  \bigskip
  \bigskip
  \begin{center}
    {\LARGE\bf Title}
\end{center}
  \medskip
} \fi

\bigskip

\begin{abstract}
In this paper, we propose a general subgroup analysis framework based on semiparametric additive mixed effect models in longitudinal analysis, which can identify subgroups on each covariate and  estimate the corresponding regression functions simultaneously. In addition, the proposed procedure is applicable for both balanced and unbalanced longitudinal data.
A backfitting combined with $k$-means algorithm is developed to estimate each semiparametric additive component across subgroups and detect subgroup structure on each covariate respectively. The actual number of groups is estimated by minimizing a Bayesian information criteria. The numerical studies demonstrate the efficacy and accuracy of the proposed procedure in identifying the subgroups and estimating the regression functions.  In addition, we illustrate the usefulness of our method with an application to PBC data and provide a meaningful partition of the population.

\end{abstract}

\noindent%
{\it Keywords:}  {\small subgroup identification, additive model, mixed effect, backfitting}   \vfill


\section{Introduction}

Subgroup analysis has emerged as important drug development tool with the demand of precision medicine emerging and rising 
\citep{foster2011subgroup,Schwalbeetal}. Consequently, there is an increasing need to distinguish homogeneous subgroups of individuals, detect the diverse patterns in the subpopulations, model the relationships between the response variable and predictors differently across the subpopulations and make the best personalized predictions for individuals belonging to different subgroups. Thus, virous statistical methods have been developed for subgroup identification in longitudinal data, such as decision trees, mixture models, regularization methods and change point methods etc.

Since the seminal book on classification and regression trees (CART) by \citep{breiman1984classification}, tree-based methods become  widely used for subgroup identification. In general, a tree recursively partitions the subjects into binary nodes until specific stopping rule is met and in this way subgroups are yielded. \cite{sela2012re} developed the RE-EM Tree procedure, fitting a mixed effect model by regarding the fixed effect as a regression tree and iteratively,  estimating the random effect and the fixed effect like EM algorithm rather than traditional maximum likelihood estimation. \cite{loh2013regression} extended GUIDE algorithm to longitudinal and multiresponse data, by means of treating each longitudinal data series as a curve and using chi-squared tests of the residual curve patterns to select a variable to split each node of the tree. Model-based recursive partitioning method developed by \cite{zeileis2008model} and \cite{seibold2016model}, fits a parametric model in each node, with splitting variable chosen by independence tests and parameter values estimated as solutions to the score equations, which is the partial derivatives of the log-likelihood. \cite{wei2020precision} proposed interaction tree for longitudinal trajectories, combining mixed effect models with regression splines to model the nonlinear progression patterns among repeated measures, and identify subgroups with differential treatment effects for two-sample comparisons in longitudinal randomized clinical trials.

Furthermore, the growth mixture modeling  methods \citep{fraley2002model,song2007clustering,jung2008introduction}, have been widely utilized to identify and predict latent subpopulations for longitudinal data. For example, \cite{mcnicholas2010model} and \cite{mcnicholas2016model} developed a model-based clustering method (called \textit{longclust}) method for balanced longitudinal data.  \cite{shenqu2020} proposed a structured mixed-effects approach for longitudinal data to model subgroup distribution and identify subgroup membership simultaneously. In general, such approaches require to know the underlying distribution of data  and  the number of mixture components in advance. Comparably, clustering regression curves can be done to find subgroups. \cite{abraham2003unsupervised} proposed a clustering procedure which consists of two stages: fitting the functional data by B-splines and partitioning the estimated model coefficients using a $k$-means algorithm. They also shown their procedure possessing strong consistency. \cite{ma2006data} and \cite{coffey2014clustering} adopted the smoothing spline and penalized spline approximations under the mixed effect framework respectively to model time-course gene expression data and detect subpopulations. In addition, some distance-based clustering methods have been proposed to cluster the trajectories of longitudinal data, for instance, \cite{genolini2010kml} combined generalized Fréchet distance with k-means to achieve this goal. \cite{lv2020nonparametric} proposed a two-step classification algorithm which compares the $L_2$-distances between kernel estimates of nonparametric functions to estimate parameters of group memberships and the number of subgroups simultaneously. \cite{ZHANG201954} proposed a  quantile-regression-based clustering method for panel data by using a similar idea of k-means clustering to identify subgroups with heterogeneous slopes at a single quantile level or across multiple quantiles.


There has also been a line of work on regularization methods. For example, \cite{ma2017concave} proposed a concave pairwise fusion learning method to identify subgroups whose heterogeneity is driven by unobserved latent factors and thus can be represented by subject-specific intercepts. \cite{zhang2019robust} employed penalized median regression to detect subgroups automatically and achieve robustness against outliers and heteroscedasticity in random errors.
\cite{LU2021104691} proposed a subgroup identification method based on concave fusion penalization and median regression for longitudinal data with dropouts. However, these aforementioned methods mainly focused on dividing the individuals into several groups according to the intercept or the whole list of all regression coefficients, not on detecting subgroups on each covariate separately. For this purpose, \cite{li2019subgroup} proposed  an estimation procedure combining the likelihood method and the change point detection with the binary segmentation algorithm under partial linear models.

There is a clear need to relax the parametric assumption
posed in \cite{li2019subgroup} as model misspecification may result in biased estimation. An attractive approach is the semiparametric additive mixed effect model, which retains the flexibility of the nonparametric model
but avoids the curse of dimensionality of a fully nonparametric model.
In this paper, we propose a very general framework to recognize the subgroup structure on each covariate and estimate the regression functions in each subgroup simultaneously, based on semiparametric additive mixed effect model. Specifically, using the densely observed data for each individual, we give a initial estimates of the parametric part and additive components in the model, pretending that there is no subgroups in the population. Then we adopt the backfitting and $k$-means algorithm to estimate each semiparametric additive component across subgroups and detect subgroup structure on each covariate. The utilization of mixed effect enables us capture the within-subject correlation among longitudinal measurements, while additive nonparametric components are helpful for us to characterize the nonlinear relationships between covariates and the response. 

The major contributions in this paper can be outlined as follows. First, we  propose a very general framework for identifying subgroups on multiple covariates, which possesses the flexibility and interpretability of semiparametric additive mixed effect model. In addition, the proposed method can detect subgroups on each covariate, consequently it could make us more clear about which covariate contributes to the existence of subgroups among population. Second, the proposed model is applicable for both balanced and unbalanced longitudinal data.  Third, the proposed procedure holds some theoretical properties and computationally simplicity,  our simulation studies  also indicate the fine efficiency of our approach.

The rest of this paper is organized as follows. Section 2 elaborates the proposed subgroup identification methods. Section 3 discusses the asymptotic properties. Section 4 conducts simulation studies to evaluate  the proposed approach. Application to the PBC dataset is presented in Section 5. Finally, Section 6 concludes this paper by summarizing the main findings and outlining future research.

\section{Methods}

\subsection{The model}
Consider the following longitudinal dataset $(Y_{it}, {X}_{it}, {Z}_{it}, {S}_{i}),i=1,\ldots,n,t=1,\ldots,n_i$ collected from $n$ independent individuals, where $Y_{it}$ is the response variable for the $i$th subject at the $t$th follow-up of total $n_i$ measurements, $ { X}_{it}=(X_{it1},\ldots,X_{itp})^{\prime}$ is a $p$-dimensional vector of predictors associated with fixed effects, $ Z_{it}$ is a $q \times 1$ vector of covariates associated with random effects, $ S_{i}$ is a $r \times 1$ vector of baseline covariates.  We assume the following semiparametric additive mixed effect model with certain subgroup structure in the population:
\begin{equation}
    \label{1}
    Y_{it}=  S_{i}^{\prime} \beta +f_1(X_{it1}) + \ldots +f_p(X_{itp}) +  Z_{it}^{\prime} b_i + \varepsilon_{it}, \quad i = 1,\ldots,n, \quad t=1,\ldots, n_i,
\end{equation}
where
\begin{equation}
    \label{2}
     f_j(X_{itj})=
\begin{cases}
	f_{1j}(X_{itj}), & \text { when } i \in \Omega_{1, j} \\
    f_{2j}(X_{itj}), & \text { when } i \in \Omega_{2, j} \\
    \qquad \vdots & \qquad \vdots \\
    f_{m_j j}(X_{itj}), & \text { when } i \in \Omega_{m_j,j}.
\end{cases}
\end{equation}
with $\{\Omega_{kj}: 1 \leq k \leq m_j \}, j=1,\ldots,p$ representing an unknown partition of the subject index set $\{i: 1 \leq i \leq n\}$ on covariate $X_j$. Note that the number of subgroups $m_j$ is unknown either. 
The traditional random-effects model assumes that the random effects follow a certain distribution, for instance, a normal distribution, and focuses on the variance component estimation of the random effects. However, we do not impose any distribution assumption on $b_i$, but instead assume that the random effects have mean zero and variance $\sigma_b^2 >0$. In addition, $\varepsilon_{it}$ is the random error with zero mean and variance $\sigma_{\varepsilon}^2$.
Under this model, the trajectory of the $i$th subject over time is represented by the linear regression part, the group-specific unknown additive functions and the subject-specific random effect. Without loss of generality, we assume that all $E[f_j(\cdot)]=0$ to prevent identifiability problems.


Note that model $\eqref{1}$ is very flexible, it retains  the flexibility of the nonparametric model
but avoids the need to model a fully nonparametric model. The linear part possesses the ease of interpretability on the baseline covariates,  the random effect term represents the heterogeneity between individuals, and the additive nonparametric part reflect the nonlinear relationships between response and each covariate. The proposed model $\eqref{1}$ also allows subgroups exist on each covariate, thus for the $j$th predictor, we may have $m_j$ different nonparametric additive components and thus form $m_j$ different subgroups. The subjects with the same $f_j$ share a similar nonparametric dependence between the response and the $j$th predictor. It can be seen that the detected subgroups for different $j$ may have completer, partial, or zero overlap, so there may have at least $\text{max}_j m_j$ subgroups and at most $\sum_{j=1}^{p} m_j$ subgroups, where each subgroup has a distinct set of additive components. In the following, we use $|A|$ to denote the cardinality of a set $A$.

\subsection{Subgroup identification algorithm}

\subsubsection{Initial estimates}
In this section, we describe the procedure for initial estimates of the additive components. Let $\bm Y_i=(Y_{i1},\ldots,Y_{in_i})^{\prime}$, $\bm Z_{i}=(Z_{i1},\ldots,Z_{in_i})^{\prime}, \bm \varepsilon_i=(\varepsilon_{i1},\ldots, \varepsilon_{in_i})^{\prime}$, and let $\bm S_i = \mathbf{1}_{\boldsymbol{n}_{i}} S_i^{\prime}$, where $\mathbf{1}_{\boldsymbol{n}_{i}}$ is a $n_i \times 1$ vector with entries equal to one. Based on B-spline approximation, for each subject $i$, model $\eqref{1}$ can be written in matrix notation as
\begin{equation}
\label{3}
  \bm Y_i \approx  \bm S_i \beta + \bm B_{i1} \gamma_{i1}+ \ldots + \bm B_{ip} \gamma_{ip}+ \bm Z_{i} b_i + \bm \varepsilon_i, \quad i=1,\ldots,n,
\end{equation}
where $\bm B_{ij}=(B_{i1j}, \ldots, B_{in_ij})^{\prime}$ are B-spline basis functions, for $j=1,\ldots,p$,
\begin{equation}
    \label{4}
    \gamma_{ij}=
 \begin{cases}
    \gamma_{1j}, & \text { when } i \in \Omega_{1, j} \\
 	\gamma_{2j}, & \text { when } i \in \Omega_{2, j} \\
 	\quad \vdots & \qquad \vdots \\
 	\gamma_{m_j j}, & \text { when } i \in \Omega_{m_j,j}.
 \end{cases}
\end{equation}
In addition, let $\bm \Sigma_i$ and $\bm V_i$ denote the true and assumed working covariance of $\bm Y_i$, where $\bm V_i = \bm A_i^{1/2} \bm R_i \bm A_i^{1/2}$. Here, $\bm A_i$ is a $n_i \times n_i$ diagonal matrix of the marginal variance of $\bm Y_i$, and $\bm R_i$ is the corresponding working correlation matrix. In this situation, $\bm V_i$ is assumed to depend on a nuisance finite dimensional parameter vector $\tilde {\bm \theta}$.

In order to identify the subgroups on each covariate, we first pretend that all individuals come from the same group, or that there are no subgroups in the population, and obtain an initial estimate by fitting model $\eqref{3}$. The initial estimates can be obtained by approximating the additive components $f_1(\cdot),\ldots,f_p(\cdot)$ through a spline basis expansion and then employing an extension of the standard GEE. Since smoothing spline has higher computational cost, here we implement B-spline, which not only maintains comparable performance in estimating, but also reduces computation complexity.

The initial estimates of nonparametric functions, $\widehat f_1^{(0)},\ldots,\widehat f_p^{(0)}$, or to be more specific, the initial estimates of B-spline coefficients $\widehat \gamma_1^{(0)},\ldots,\widehat \gamma_p^{(0)}$ will be input into the subgroup identification procedure in the next subsection. \cite{huang2007efficient} has shown that the estimate of B-spline coefficient is efficient if the covariance structure is correctly specified and it is still consistent and asymptotically normal even if the covariance structure is misspecified.  The explicit derivation and proof can refer to \cite{huang2007efficient}, so we will not discuss it in detail here.

\subsubsection{Backfitting and subgroup pursuit}
Given model $\eqref{1}$, we can employ backfitting algorithm to fit additive models by iteratively solving
\begin{equation}
    \label{5}
    E[Y- S \beta - \sum_{k \neq j} f_k (X_k)| X_j] = f_j(X_j),j=1,\ldots,p,
\end{equation}
and at each stage we replace the conditional expectation of the partial residuals with a univariate smoother. We first set
\begin{equation}
    \label{6}
    W_{itj} = Y_{it} -  S_{i} \widehat \beta - \sum_{k \neq j} \widehat f_k^{(0)}(X_{itk}), j =1,\ldots,p,
\end{equation}
thus the problem can be transformed to univariate nonparametric regression. Various methods have been proposed to solve univariate nonparametric regression issue, for instance, kernel method, local polynomial regression and splines, and we are going to utilize the B-spline in each stage, since it has nice properties of efficiency and flexibility \citep{de1978practical,lorentz1993constructive}.

For $j =1,\ldots,p$, we aim to divide $n$ subjects into $m_j$ groups such that subjects with homogeneity being partitioned into the same group.
We first assume that each subject has its own regression function between $X_j$ and $W_j$. In other words, hypothesizing that for each subject $i$, we have
\begin{equation}
    \label{7}
    W_{itj} = m_{ij}(X_{itj}) + Z_{it}^{\prime}b_i + \varepsilon_{itj},\quad t=1,\ldots,n_i,
\end{equation}
and then cluster functions $\{m_{ij}(\cdot),i=1,\ldots,n\}$ into $m_j$ groups.

To fit the nonparametric univariate function in \eqref{7}, let $x \in [a,b]$ and let $(\xi_0=) a < \xi_1 < \xi_2 < \ldots < \xi_{J_n} <b (= \xi_{J_n+1})$ be a subdivision by $J_n$ distinct points on $[a,b]$, and the spline function $s(x)$ is a polynomial of degree $d$ on any interval $[\xi_{i-1}, \xi_{i}]$ which has $d-1$ continuous derivations on $(a,b)$. On the basis of B-spline, we can write a spline as $s(x,\gamma) = \sum_{l=1}^{J_n+d+1} \gamma_l B_l(x)$, where $\gamma = (\gamma_1,\ldots,\gamma_{J_n+d+1})^{\prime}$ is the vector of spline coefficients. Noting that $\bm W_{ij}=(W_{i1j},\ldots,W_{i n_i j})^{\prime}$, and $\bm B_{ij}= \{ B_l(x_{itj})\}^{l=1,\ldots,J_n+d+1}_{t=1,\ldots,n_i}$, also supposing that $\bm B^{\prime}_{ij} \bm B_{ij}$ is non-singular, the spline coefficients are estimated by
\begin{equation}
    \label{8}
    \widehat \gamma_{ij} = \mathop {argmin} \limits_{\gamma_{ij}} \frac {1}{n_i} \sum_{t=1}^{n_i} (W_{itj} - s(x_{itj},\gamma_{ij}))^2=[\bm B^{\prime}_{ij} \bm V_i \bm B_{ij}]^{-1} \bm B^{\prime}_{ij} \bm V_i \bm W_{ij}.
\end{equation}
Thanks to the fine properties of B-spline that the $n$ functions we estimate share the same degree and knots, as well the same basis functions $(B_{1},\ldots,B_{J_n+d+1})$, each coordinate $\gamma_{ij}$ has the same meaning for each function $m_{ij},i=1,\ldots,n$. Thus the set of functions $\{ m_{1j}(\cdot),\ldots,m_{nj}(\cdot) \}$ is summarized by $\{ \gamma_{1j},\ldots, \gamma_{nj} \}$, a set of vectors of $\mathbb R^{J_n+d+1}$, and we just need to partition their coefficients $\gamma_{ij}$ \citep{abraham2003unsupervised}.

For the set $\{ \gamma_{1j},\ldots, \gamma_{nj}\}$, we can easily establish the $n \times n$ distance matrix, most of time based on Euclidean distance. Numerous methods have been proposed to deal with this kind of clustering problem based on dissimilarity measure, such as k-means \citep{macqueen1967some}, fuzzy clustering \citep{bellman1966abstraction,ruspini1969new}, hierarchical clustering \citep{johnson1967hierarchical}, model-based clustering \citep{fraley2002model}. In the literature, $k$-means is one of the most popular clustering methods to make objects within clusters mostly homogeneous and objects between clusters mostly heterogeneous by minimizing the objective function
\begin{equation}
    \label{9}
    \sum_{s=1}^{k} \sum_{i \in \Omega_{s}}\left\|z_{i}-\mu_{s}\right\|_2^{2},
\end{equation}
where $\mu_s$ is the center of cluster $\Omega_{s}$ and $z_i$ is the $i$th object. 

After partitioning $\{i:1 \leq i \leq n \}$ into $\hat m_j$ subgroups on $X_j$, we conduct generalized least squares to the subjects who belong to the group $\widehat \Omega_{k,j},k=1,\ldots,\hat m_j$ to obtain $\widehat \gamma_{kj}^{(r)}$ in the $r$th iteration by fitting
\begin{equation}
    \label{10}
    \bm W_{ij} \approx \bm B_{ij} \gamma_{kj} + \bm Z_{i} b_i + \bm \varepsilon_i, \quad i \in \widehat \Omega_{k,j}.
\end{equation}
We repeat this backfitting procedure for $j=1,\ldots,p$ until the estimated subgroups do not change anymore, and thus the final identified subgroup structure yields. The final estimator of the additive components are denoted by $\widehat f_1,\ldots,\widehat f_p$, and the final estimated B-spline coefficients are denoted by $\widehat \gamma_1,\ldots,\widehat \gamma_p$.

\subsubsection{Determining the number of clusters}

Another concerning issue is determining the number of clusters $m_j$, as $k$-means procedure requires to pre-specify the number of clusters $m_j$. In general, model-selection criteria such as Akaike Information Criterion or Bayesian Information Criterion \citep{schwarz1978estimating} can be used to decide $m_j$ to avoid over-fitting.

AIC and BIC are both methods of assessing model fit penalized for the number of estimated parameters, but they could give different results for estimating the number of clusters in a dataset.  In cluster analysis, BIC tends to be preferred to AIC in estimating the number of clusters because it uses a larger penalty term and hence can recommend fewer clusters and it’s mathematical formulation is more
meaningful in this context \citep{Pelleg2000XmeansEK}, whereas AIC is more general.
BIC can be written as
\begin{equation}
    \label{11}
    BIC(k) =-2 log(\hat L) + log(n) k,
\end{equation}
where $log(\hat L)$ is the log-likelihood of the model, $k$ is the total number of parameters and $n$ is the sample size. An attractive property of the BIC is its consistency: the BIC selects the correct model with a probability that goes to $1$ as $n$ grows large \citep{zhang2010regularization,bai2018consistency}. Let $\hat{m}_{j}$ be the minimizer of BIC for choosing $m_j$, we show that  $\hat{m}_{j} \stackrel{P}{\rightarrow} m_{j}$ in the next section. In summary, the main steps of our subgroup identification algorithm is summarized in Algorithm 1.

\renewcommand{\algorithmicrequire}{\textbf{Input:}}
\renewcommand{\algorithmicensure}{\textbf{Output:}}
\begin{algorithm}[htb]
	\caption{Subgroup identification based on backfitting}
	\label{alg::conjugateGradient}
	\begin{algorithmic}[1]
		\REQUIRE ~~\\
		$Y_{it}$: the $t$th observation of the $i$th individual of response variable; \\
		$X_{it}$, the $t$th observation of the $i$th individual of time dependent variables;\\
		$Z_{it}$, the $t$th observation of the $i$th individual of covariates associated with random effects;\\
		$S_{i}$, baseline covariates of the $i$th individual;
		\ENSURE ~~\\
		group index $\widehat \Omega_{1j},\ldots,\widehat \Omega_{m_j j}, j=1,\ldots,p$;\\
		estimated coefficient $\widehat \beta$ and estimated function $\widehat f_1,\ldots, \widehat f_p$;
		\STATE \textbf{Initialize}: fit $Y_{it} = S_{i}^{\prime} \beta + f_1(X_{it1}) +\ldots + f_p(X_{itp}) + Z_{it}^{\prime} b_i+\varepsilon_{it}, i=1,\ldots,n,t=1,\ldots,n_i$;\\
		\STATE \textbf{return} initial estimate $\widehat f_1^{(0)},\ldots,\widehat f_p^{(0)},\widehat \beta$;\\
		\FOR{$j=1,\ldots,p$}
		\STATE set $W_{itj} = Y_{it} - S_{i} \widehat \beta - \sum_{k \neq j} \widehat f_k(X_{itk}) $;\\
		\FOR{$i=1,\ldots,n$}
		\STATE solve univariate nonparametric regression $W_{itj} = m_{ij}(X_{itj})+  Z_{it}^{\prime} b_i + \varepsilon_{it}$ with B-spline;\\
		\STATE \textbf{return} B-spline coefficients $\widehat \gamma_{ij}$;\\
		\ENDFOR
		\STATE clustering $\{\widehat \gamma_{1j},\ldots,\widehat \gamma_{nj}\}$ with k-means;\\
		\STATE partition $\{i \in 1,\ldots,n \}$ to group $\widehat \Omega_{1,j},\ldots,\widehat \Omega_{\widehat m_j,j }$;\\
		\STATE update $\widehat f_j = \widehat f_{1j}I(i \in \widehat \Omega_{1,j}) + \widehat f_{2j} I(i \in \widehat \Omega_{2,j}) + \ldots + \widehat f_{\widehat m_j j} I(i \in \widehat \Omega_{\widehat m_j,j})$;\\
		\ENDFOR
		\STATE repeat iteration line (3-12) until subgroups $\{\widehat \Omega_{1j},\ldots,\widehat \Omega_{\widehat m_j j}, j=1,\ldots,p\}$ converge;\\
		\RETURN subgroup index $\{\widehat \Omega_{1j},\ldots,\widehat \Omega_{\widehat m_j j}, j=1,\ldots,p\}$ and estimated $\widehat f_1,\ldots, \widehat f_p,\widehat \beta$.\\
   \end{algorithmic}
\end{algorithm}

\section{Asymptotic properties}

In this section, we establish the asymptotic properties of the proposed estimator. Let $L_2(\mathcal X)$ be the space of all square integrable functions on $\mathcal X = [0,1]$, and $\|f\|^2_2=\int_{0}^{1} f(x)^2 dx$ for any $f \in L_2(\mathcal X)$. Denote $\|f\|^2 = E[f(X)^2]$ and $\|f\|^2_n = \frac{1}{n} \sum_{i=1}^{n}f(X_i)^2$ as the theoretical and empirical norms respectively, where $X_i$ is a random sample from $\mathcal X$. For any set $\mathcal G$, $|\mathcal G|$ represents the cardinal of $\mathcal G$. For unbalanced dataset, we define $n_0 = \text{min}_i \{ n_i \}$, and $J_n$ is the number of internal knots. And several regularity conditions are required to establish the asymptotic properties.\\
(A1) The function $f_i(\cdot) \in C^r[0,1](i=1,\ldots,n)$ for some $r \geq 1$. \\
(A2) Let $a = \xi_0 < \xi_1 < \cdots < \xi_{J_n} < \xi_{J_{n+1}} =b$ be a partition of $[a,b]$ into $J_n+1$ subintervals, the knot sequences $\{ \xi_l \}_{l=0}^{J_n+1}$ have bounded mesh ratio, which means for some constant $C$
$$\frac{\max _{0 \leq l \leq J_{n}}\left|\xi_{l+1}-\xi_{l}\right|}{\min _{0 \leq l \leq J_{n}}\left|\xi_{l+1}-\xi_{l}\right|} \leq C.$$
(A3) The design points $\{x_{it}\} (i=1,\ldots,n;t=1,\ldots,T_i)$ follow an absolutely continuous density function $g_X$, and there exist constants $a_1$ and $a_2$ such that $0 < a_1 \leq \text{min}_{x \in \mathcal X} g_X(x) \leq \text{max}_{x \in \mathcal X} g_X(x) \leq a_2 < \infty$. \\
(A4) Assume that $N_g = O(N)$, where $N_{gj} = \sum_{i \in \Omega_{g,j}} n_i$ for $g=1,\ldots, m_j$, and $N_{0j} = min(N_{1j},\ldots, N_{m_j j})$, $N=\sum_{i=1}^{n} n_i$.\\
(A5) The eigenvalues of $\boldsymbol{\Sigma}=\operatorname{diag}\left(\boldsymbol{\Sigma}_{1}, \ldots, \boldsymbol{\Sigma}_{n}\right)$ and $\boldsymbol{V}=\operatorname{diag}\left(\boldsymbol{V}_{1}, \ldots, \boldsymbol{V}_{n}\right)$ are bounded away from zero to infinity.

Assumptions(A1)-(A3) are standard conditions for the nonparametric B-spline smoothing functions, and (A4) indicates that we require the cluster size to grow as the sample size increases.

\begin{theorem}
\label{th1}
Under Assumption (A1)-(A5), as $n \rightarrow \infty$, and given a sufficiently large $N_{0j}$ such that $J_n=O(N_{0j}^{\varsigma})$ with $0<\varsigma<1$, then the oracle estimator $\tilde \gamma_j$ satisfying
\begin{equation*}
    \| \tilde \gamma_j - \gamma_j\|_2^2 =O_p( J_n/N_{0j}+ J_n^{-2r}),j=1,\ldots,p,
\end{equation*}
where $\tilde \gamma_j$ denotes estimated spline coefficients when the true subgroup membership is known.
\end{theorem}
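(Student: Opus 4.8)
The plan is to exploit the oracle structure: since the true partition $\{\Omega_{g,j}\}_{g=1}^{m_j}$ is known, the estimator $\tilde\gamma_j$ decouples into the within-group weighted-least-squares estimators $\tilde\gamma_{gj}$ obtained by fitting \eqref{10} to the subjects in $\Omega_{g,j}$, so it suffices to bound $\|\tilde\gamma_{gj}-\gamma_{gj}\|_2^2$ for each fixed $g$ and then take the worst case, which by Assumption (A4) occurs at the smallest effective sample size $N_{0j}$. For a single group I would first perform a bias--variance decomposition. Let $\gamma_{gj}^\ast$ denote the coefficient vector of the best $L_2$ spline approximation to the true additive component $f_{gj}$, and write $\|\tilde\gamma_{gj}-\gamma_{gj}\|_2^2 \le 2\|\tilde\gamma_{gj}-\gamma_{gj}^\ast\|_2^2 + 2\|\gamma_{gj}^\ast-\gamma_{gj}\|_2^2$, treating the two pieces as a stochastic (estimation) term and a deterministic (approximation) term respectively.

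For the approximation term I would invoke classical B-spline approximation theory: under (A1), $f_{gj}\in C^r[0,1]$, under (A2) the knot sequence has bounded mesh ratio, and under (A3) the design density is bounded above and below, so de Boor's results furnish a spline with sup-norm approximation error $O(J_n^{-r})$. Combined with the spectral equivalence between the Euclidean coefficient norm and the $L_2$ norm of the associated spline (the normalized B-spline Gram matrix having eigenvalues bounded away from $0$ and $\infty$), this yields $\|\gamma_{gj}^\ast-\gamma_{gj}\|_2^2 = O(J_n^{-2r})$, accounting for the second term in the rate.

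The substantive work is the stochastic term. Writing $\bm W_{ij} = \bm B_{ij}\gamma_{gj}^\ast + \bm\delta_{ij} + \bm Z_i b_i + \bm\varepsilon_i$ for $i\in\Omega_{g,j}$, where $\bm\delta_{ij}$ collects the pointwise approximation error, I would substitute this into the closed-form solution of \eqref{8}--\eqref{10} so that $\tilde\gamma_{gj}-\gamma_{gj}^\ast$ becomes $\bm G_{gj}^{-1}$ applied to a linear functional of $(\bm\delta_{ij},\bm Z_i b_i,\bm\varepsilon_i)$, with $\bm G_{gj}=\sum_{i\in\Omega_{g,j}}\bm B_{ij}'\bm V_i\bm B_{ij}$. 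The first step is to show that, after normalization by $N_{gj}$, the smallest and largest eigenvalues of $\bm G_{gj}$ are bounded away from $0$ and $\infty$ with probability tending to one; this is where (A2), (A3) and the covariance bounds in (A5) enter, together with the regime $J_n=O(N_{0j}^\varsigma)$, $0<\varsigma<1$, which guarantees that the empirical Gram matrix concentrates around its population counterpart. Given this conditioning, I would bound the contribution of $\bm Z_i b_i + \bm\varepsilon_i$: since the random effects and errors have mean zero and variances controlled by (A5), a direct second-moment computation of the resulting quadratic form shows its expected squared norm is of order $\operatorname{tr}(\bm G_{gj}^{-1}) = O(J_n/N_{gj})$, while the $\bm\delta_{ij}$ contribution is absorbed into the already-established $O(J_n^{-2r})$ bias. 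Using $N_{gj}\ge N_{0j}$ then delivers $O_p(J_n/N_{0j})$ for the first term.

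I expect the main obstacle to be the uniform eigenvalue and conditioning control of the random-design Gram matrices $\bm G_{gj}$ in the unbalanced setting, since the within-subject correlation induced by the distribution-free random effects $b_i$ must be handled through the working covariance $\bm V_i$ rather than an independence assumption; verifying that the resulting ``sandwich'' variance remains of order $J_n/N_{0j}$ even when $\bm V_i$ is misspecified is the delicate part, and is exactly where Assumption (A5) and the consistency results of \cite{huang2007efficient} cited earlier are invoked. Assembling the per-group bounds and maximizing over $g=1,\ldots,m_j$ then yields the stated rate for $\|\tilde\gamma_j-\gamma_j\|_2^2$.
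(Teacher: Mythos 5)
Your proposal takes essentially the same route as the paper's proof: exploit the known partition to pool the within-group data into a closed-form GLS estimator, split the error into a spline approximation bias (de Boor's theory plus the spectral equivalence between the coefficient norm and the $L_2$ norm of the spline, giving $O(J_n^{-2r})$) and a stochastic term controlled by eigenvalue bounds on the Gram matrix and a trace computation of order $J_n/N_{0j}$, then take the worst group. The paper compresses all of this into a reference to its Lemma \ref{lem4}, whose terms $I_2$ and $I_4$ correspond exactly to your bias and noise terms, and whose supporting facts are your Gram-matrix and approximation lemmas.

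There is, however, one concrete omission in your decomposition of the partial residual. You write $\bm W_{ij} = \bm B_{ij}\gamma_{gj}^{\ast} + \bm\delta_{ij} + \bm Z_i b_i + \bm\varepsilon_i$ with $\bm\delta_{ij}$ collecting only the spline approximation error of $f_{gj}$. But the oracle estimator in the paper is
$$\tilde\gamma_{11} = (\bm B_1^{\prime}\bm V_1^{-1}\bm B_1)^{-1}\bm B_1^{\prime}\bm V_1^{-1}\Bigl(\bm Y_1 - \bm S_1\widehat\beta - \sum_{k=2}^{p}\bm B_k\widehat\gamma_k^{(0)}\Bigr),$$
that is, the residual is formed with the \emph{estimated} $\widehat\beta$ and the \emph{initial estimates} of the other additive components, not their true values. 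Consequently $\bm W_{ij}$ also contains the plug-in errors $\bm S_i(\beta - \widehat\beta)$ and $\sum_{k\neq j}\bigl(f_{ik} - \bm B_{ik}\widehat\gamma_k^{(0)}\bigr)$, neither of which is covered by your bias term (they are not spline approximation error) nor by your noise term (they are not mean-zero given the design in the way $\bm Z_i b_i + \bm\varepsilon_i$ is). The paper's Lemma \ref{lem4} bounds exactly these two contributions as its terms $I_1$ and $I_3$, each shown to be $O_p(J_n/n_i)$ by the same eigenvalue and trace arguments you already deploy for the noise term, so the final rate is unchanged; but a complete proof must include them, and your sketch as written would not account for them.
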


\begin{remark}
The result of Theorem \ref{th1} implies that the convergence rate of the oracle approximation $\tilde \gamma_j,j=1,\ldots,p$ is faster than the B-spline estimator $\widehat \gamma_j$. The convergence rate of oracle estimator assures that when prior knowledge on the true subgroup memberships is known, more information from each cluster with sufficient number of repeated measurements can be used. And the proof of Theorem \ref{th1} is similar to the proof of Lemma \ref{lem4}, which we will present in the Appendix.
\end{remark}

\begin{theorem}
\label{th2}
Under Assumption (A1)-(A5), as $n \rightarrow \infty$, and given a sufficiently large $n_0$ such that $J_n=O(n_{0}^{\varsigma})$ with $0<\varsigma<1$,, then the estimated additive components $\widehat f_j$ satisfying
\begin{equation*}
    \| \widehat f_j - f_j \|_2^2 = O_p(J_n/n_0 +J_n^{-2r}).
\end{equation*}
\end{theorem}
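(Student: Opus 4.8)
The plan is to reduce the data-driven estimator $\widehat f_j$ to the oracle estimator of Theorem~\ref{th1} by proving that the $k$-means step in \eqref{9} recovers the true partition with probability tending to one, and then transferring the oracle rate onto that high-probability event. Since the clustering is carried out on coefficient vectors while the conclusion is stated for functions, the first thing I would record is the norm equivalence between the (appropriately scaled) $\ell_2$-norm of a B-spline coefficient vector and the $L_2$-norm of the spline it generates; under the bounded mesh ratio (A2) and the density bounds (A3) this equivalence holds with constants free of $J_n$, and it lets me move between $\|\widehat\gamma_{ij}-\gamma_{kj}\|_2$ and $\|s(\cdot,\widehat\gamma_{ij})-f_{kj}\|_2$ at will.

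First I would analyse a single per-subject fit \eqref{8}. By (A1) and spline approximation theory the best spline approximant of $f_{kj}$ incurs a squared bias of order $J_n^{-2r}$, while the generalized least squares estimate based on the $n_i$ observations of subject $i$ has stochastic error of order $J_n/n_i$, the eigenvalue control (A5) guaranteeing that the relevant inverse Gram matrices are well conditioned. Taking the worst subject yields the uniform statement $\max_i\|\widehat\gamma_{ij}-\gamma_{k(i)j}\|_2^2 = O_p(J_n/n_0+J_n^{-2r})$, where $k(i)$ is the true group of subject $i$; because $J_n=O(n_0^{\varsigma})$ with $\varsigma<1$ this maximal deviation is $o_p(1)$, and this is the step in which the condition on $J_n$ is actually consumed.

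Second I would establish clustering consistency. Under the identifiability/separation hypothesis implicit in \eqref{2} that distinct groups carry distinct functions, the centers $\gamma_{1j},\ldots,\gamma_{m_jj}$ are separated by a fixed gap in $\ell_2$-norm; once the maximal per-subject deviation drops below half this gap, the estimated coefficients $\{\widehat\gamma_{ij}\}$ form $m_j$ well-separated clouds, and the global minimizer of \eqref{9} necessarily groups each cloud together. Combined with $\widehat m_j \stackrel{P}{\rightarrow} m_j$ from the BIC consistency noted after \eqref{11}, this gives $P(\widehat\Omega_{kj}=\Omega_{kj},\ k=1,\ldots,m_j)\to 1$. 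On this event the subgroup refit \eqref{10} coincides exactly with the oracle fit on the true groups, so Theorem~\ref{th1} (equivalently Lemma~\ref{lem4}) applies and delivers $\|\widehat f_j-f_j\|_2^2=O_p(J_n/N_{0j}+J_n^{-2r})$; since every group obeys $N_{0j}\ge n_0$, this is contained in the claimed $O_p(J_n/n_0+J_n^{-2r})$.

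The main obstacle I anticipate is the clustering step, for two reasons. The first is that consistency must hold uniformly over all $n$ subjects simultaneously, so I need a maximal-deviation bound rather than a pointwise one; I would obtain it through a union bound over the $n$ subjects, and ensuring it beats the growing number of terms is where moment control on the error process (beyond the bare variance assumption in the text) together with the slow growth $J_n=O(n_0^{\varsigma})$ enters. The second, more delicate issue is that the per-subject residual \eqref{6} still contains the realized random effect $Z_{it}^{\prime}b_i$, which does not average to zero within a fixed subject and therefore contaminates $\widehat\gamma_{ij}$ by an $O_p(1)$ amount; I would need to argue either that the generalized least squares weighting removes its leading contribution or that the between-cluster gap dominates the random-effect-induced spread of each cloud, so that the clouds remain separable. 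The residual error propagation through the backfitting sweeps over $j=1,\ldots,p$ is comparatively benign, as it is already controlled inside the proof of Theorem~\ref{th1}; it suffices to check that the cross-component errors stay of order $O_p(J_n/n_0+J_n^{-2r})$ so that finitely many sweeps do not inflate the rate.
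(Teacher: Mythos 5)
Your route is viable but it is genuinely different from, and logically heavier than, the paper's own proof. The paper never passes through cluster-recovery at all when proving Theorem~\ref{th2}: it simply takes the partition $\widehat\Omega_{1,j},\ldots,\widehat\Omega_{\hat m_j,j}$ at convergence of the backfitting/$k$-means iteration as given, writes the within-group GLS refit \eqref{10} explicitly, and reruns the argument of Lemma~\ref{lem4} (the $I_1$--$I_4$ decomposition) group by group, obtaining $\|\widehat\gamma_{k,j}-\gamma_{k,j}\|_2^2=O_p(J_n/n_{0k}+J_n^{-2r})$ with $n_{0k}=\min\{n_i: i\in\widehat\Omega_{k,j}\}$ and hence the stated rate $O_p(J_n/n_0+J_n^{-2r})$; exact recovery of the true partition is deferred to Theorem~\ref{th3} and is not an ingredient of Theorem~\ref{th2}. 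Your proof instead establishes recovery first (separation of the true centers, a maximal deviation bound over all $n$ subjects, BIC consistency) and then transfers the oracle rate of Theorem~\ref{th1}; this buys a sharper intermediate conclusion ($J_n/N_{0j}$ rather than $J_n/n_0$, correctly weakened via $N_{0j}\ge n_0$), but it costs you hypotheses the paper never imposes, notably the moment conditions needed for the union bound over $n$ subjects, and it makes Theorem~\ref{th2} depend on Theorem~\ref{th3}-type results, whereas in the paper the dependence runs the other way. Two of your anticipated obstacles deserve comment. First, the random-effect contamination you worry about is in fact neutralized by Assumption (A5): bounded eigenvalues of $\bm\Sigma_i$ force the term $(\bm B_{i1}^{\prime}\bm V_i^{-1}\bm B_{i1})^{-1}\bm B_{i1}^{\prime}\bm V_i^{-1}(\bm Z_i b_i+\bm\varepsilon_i)$ to be $O_p(J_n/n_i)$ in squared norm (this is exactly the $I_4$ term of Lemma~\ref{lem4}), so the clouds do not spread by $O_p(1)$ under the paper's conditions (though note (A5) thereby implicitly excludes, e.g., pure random-intercept structure). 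Second, your claim that on the recovery event the refit \emph{coincides exactly} with the oracle fit is not quite right, since the oracle estimator in Theorem~\ref{th1} subtracts the initial estimates $\widehat\gamma_k^{(0)}$ of the other components while the converged estimator subtracts the final ones; the discrepancy is harmless but must be absorbed by an $I_3$-type term rather than by exact equality. Also be aware that the paper's own proof carries an implicit homogeneity assumption on each converged group (it assigns a single true $\gamma_{k,j}$ to $\widehat\Omega_{k,j}$), so your insistence on proving recovery first, while not the paper's route, is arguably the more self-contained organization.
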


\begin{remark}
Theorem \ref{th2} shows that the convergence rate of the proposed additive estimator $\widehat f_j$ is of the same order as B-spline coefficients estimator $\widehat \gamma_{ij}^{(r)}$ in backfitting procedure. Theorem \ref{th2} holds given a sufficiently large number of repeated measurements, however in practice, it does not have to be such large, in simulations if $n_i$ is bigger than 7 or 8, we will have a nice result, more than 10 times repeated measurements would be better. The proof of Theorem \ref{th2} is given in the Appendix.
\end{remark}

\begin{theorem}
\label{th3}
Under the same conditions in Theorem \ref{th2}, as $n_0 \rightarrow \infty$ and $\| \widehat \gamma_{ij} - \gamma_{ij}\|_2^2 \rightarrow 0$, we have
\begin{equation*}
P(\widehat \Omega_{j} = \Omega_{j}) \rightarrow 1,
\end{equation*}
where $\widehat \Omega_j = \{\widehat \Omega_{1,j}, \ldots, \widehat \Omega_{m_j,j} \}$ is the estimated subgrouping membership on covariate $X_j$, $\Omega_j = \{ \Omega_{1,j}, \ldots, \Omega_{m_j,j} \}$ is the true subgrouping membership, and $\widehat \gamma_{ij}$ is the estimate of $\gamma_{ij}$.
\end{theorem}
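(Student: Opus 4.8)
The plan is to show that the $k$-means step recovers the true partition by reducing the clustering problem to a well-separated point configuration and then applying a pigeonhole plus cost-domination argument to the $k$-means objective. Fix a covariate index $j$. By construction the true coefficients $\gamma_{ij}$ take only the $m_j$ distinct values $\gamma_{1j},\ldots,\gamma_{m_j j}$, with $\gamma_{ij}=\gamma_{kj}$ exactly when $i\in\Omega_{k,j}$, so the true points collapse onto $m_j$ locations. I would set the minimal separation
\[
\delta_j=\min_{1\le k<k'\le m_j}\|\gamma_{kj}-\gamma_{k'j}\|_2,
\]
which is strictly positive because distinct subgroups carry distinct regression functions and hence distinct spline coefficients; I would record this as an explicit identifiability requirement. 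The first technical point is to upgrade the pointwise hypothesis $\|\widehat\gamma_{ij}-\gamma_{ij}\|_2^2\to 0$ to the uniform statement $\epsilon_n:=\max_{1\le i\le n}\|\widehat\gamma_{ij}-\gamma_{ij}\|_2\stackrel{P}{\rightarrow}0$. Under (A4)--(A5) and the per-subject rate $O_p(J_n/n_0+J_n^{-2r})$ from Theorem \ref{th2}, this follows from a union bound over the $n$ subjects together with tail control on the GLS estimator, provided $n_0\to\infty$ fast enough relative to $n$. On the event $\mathcal E_n=\{\epsilon_n<\delta_j/3\}$, whose probability tends to one, every $\widehat\gamma_{ij}$ with $i\in\Omega_{k,j}$ lies in the ball $B_k=\{z:\|z-\gamma_{kj}\|_2<\delta_j/3\}$, and the $m_j$ balls are mutually disjoint.

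Next I would analyze the global minimizer of the $k$-means objective \eqref{9} with $k=m_j$ centers on $\mathcal E_n$. The true partition is feasible, so the optimal within-cluster sum of squares is at most $\sum_{i=1}^n\|\widehat\gamma_{ij}-\gamma_{k(i)j}\|_2^2\le n\epsilon_n^2$. Suppose some optimal center failed to land within $\delta_j/3$ of a true center $\gamma_{k^\ast j}$; since there are exactly $m_j$ centers for the $m_j$ disjoint balls, pigeonhole forces $B_{k^\ast}$ to contain no center, so each of the $|\Omega_{k^\ast,j}|$ points of that cluster is matched at distance at least $\delta_j/3-\epsilon_n$, contributing cost at least $|\Omega_{k^\ast,j}|(\delta_j/3-\epsilon_n)^2$. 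By (A4) the smallest cluster contains a non-vanishing fraction $\rho n$ of the subjects, so this lower bound is of order $n\delta_j^2$ and strictly exceeds the feasible cost $n\epsilon_n^2$ once $\epsilon_n$ is small (the factors of $n$ cancel, leaving the requirement $\rho(\delta_j/3-\epsilon_n)^2>\epsilon_n^2$). This contradicts optimality, so each ball receives exactly one center; the nearest center to any point is then the one in its own ball, since inter-ball gaps exceed $\delta_j/3$ while intra-ball radii are below it, and the optimal assignment coincides with $\Omega_j$.

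Finally I would fold in the number-of-clusters step: the BIC consistency $\widehat m_j\stackrel{P}{\rightarrow}m_j$ guarantees that with probability tending to one the algorithm clusters into the correct number $m_j$, and on $\mathcal E_n\cap\{\widehat m_j=m_j\}$ the preceding paragraph gives $\widehat\Omega_j=\Omega_j$; since this event has probability tending to one, the conclusion $P(\widehat\Omega_j=\Omega_j)\to 1$ follows, and intersecting over the finitely many $j=1,\ldots,p$ preserves it.

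I expect the main obstacle to be two-fold. The milder issue is the passage from pointwise to uniform consistency $\epsilon_n\stackrel{P}{\rightarrow}0$, where the interplay among $n$, $n_0$ and $J_n$ must be controlled by a tail and union-bound argument. The more delicate gap is that the cost-domination argument governs the \emph{global} $k$-means minimizer, whereas Lloyd's algorithm returns only a local optimum; I would address this by showing that on $\mathcal E_n$ the true configuration is a stable fixed point of the Lloyd iteration (so that a reasonable initialization converges to it), or by stating the result for the global minimizer and flagging the distinction explicitly.
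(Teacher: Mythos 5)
Your proposal is correct in substance, but it takes a genuinely different route from the paper. The paper's entire proof is an appeal to Theorem 1 of \cite{abraham2003unsupervised}: a Pollard-type consistency result saying that, once $\|\widehat\gamma_{ij}-\gamma_{ij}\|_2^2\to 0$, the minimizer of the empirical $k$-means objective is arbitrarily close to the minimizer of the corresponding objective evaluated at the true coefficients, from which the paper directly asserts $P(\widehat\Omega_j=\Omega_j)\to 1$. You instead prove exact recovery from first principles: positive separation $\delta_j$ among the $m_j$ distinct true coefficient vectors, uniform-over-$i$ consistency $\epsilon_n\stackrel{P}{\rightarrow}0$, disjoint balls, and a cost-domination plus pigeonhole argument forcing any global $k$-means optimum to place exactly one center per ball, combined with the BIC step (the paper's Lemma \ref{lem5}) to fix $\widehat m_j=m_j$. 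What your route buys is that it makes explicit exactly the points the paper's citation leaves implicit: (i) the identifiability requirement that distinct subgroups have separated spline coefficients, which the paper never states; (ii) that the hypothesis $\|\widehat\gamma_{ij}-\gamma_{ij}\|_2^2\to0$ must hold uniformly over all $n$ subjects simultaneously (the per-subject rate $O_p(J_n/n_i+J_n^{-2r})$ does not give this for free as $n\to\infty$, a subtlety the paper skips); and (iii) the geometric bridge from ``centers are close'' to ``the partition is exactly recovered,'' which is precisely the step missing between the paper's citation and its conclusion. Your concern about Lloyd's algorithm returning only a local optimum applies equally to the paper, since the cited theorem also concerns the global minimizer, so it is not a deficiency relative to the paper's own standard. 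Two small repairs to your write-up: the event $\mathcal{E}_n=\{\epsilon_n<\delta_j/3\}$ is slightly too weak for your final nearest-center step --- you need roughly $\epsilon_n<\delta_j/6$ so that $\epsilon_n+\delta_j/3<2\delta_j/3-\epsilon_n$ --- and condition (A4) as written bounds the observation totals $N_{gj}$ rather than the subject counts $|\Omega_{g,j}|$, so your ``smallest cluster has $\rho n$ subjects'' reading should be flagged as an (intended but unstated) assumption.
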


\begin{remark}
Theorem \ref{th3} indicates that when there is a sufficient number of repeated measurements for each subject, the proposed method can identify the true subgroup with probability tending to 1.
\end{remark}

\section{Numerical studies}
In this section, we conduct Monte Carlo simulations on several examples to investigate the performance of the proposed subgroup identification procedure, and also compare the proposed method with some existing methods. Balanced and unbalanced longitudinal datasets are both considered, the situations of no subgroup (only one group), two subgroups are generated in different settings. Three cases are set up with the true additive components to be the combination of linear functions, the combination of nonlinear functions, and also the combination of linear functions and nonlinear function. For each simulation setting, we conduct the experiment for sample size $n=30,50,100,200$ and repeat 100 times for all cases, as well the working correlation matrix is chosen to be first-order autoregressive and exchangeable structure respectively (AR(0.3), AR(0.5), EX(0.3), EX(0.5)).

As for the evaluation of subgroup identification, we consider employing the accuracy percentage ($\%$), Rand Index (RI) \citep{rand1971objective} and Normalized Mutual Information (NMI) \citep{vinh2010information} to assess the identification performance, as they can measure how close the estimated grouping structure approaches the true structure. These three values are ranging between 0 and 1, with larger value indicating a higher score of similarity between the two groups. The accuracy percentage($\%$) is defined as the proportion of subjects that are correctly identified. And the definition of Rand Index is
\begin{equation*}
    RI = \frac{TP + TN}{TP+FP+FN+TN},
\end{equation*}
where TP (true positive) is the number of pairs fo subjects that are from different subgroups and assigned to different clusters, TN (true negative) is the number of pairs that are in the same subgroup and assigned to the cluster, FP (false positive) means the number of pairs that in the same subgroup but assigned to different clusters, FN (false positive) denotes the numbers of pairs which are from different subgroups but assigned to the same cluster.
Assuming that $A = \{A_1,A_2,\ldots \}$ and $B=\{B_1,B_2,\ldots \}$ are two sets of disjoint group of $\{1,2,\ldots,n\}$, the NMI is defined as
\begin{equation*}
    NMI(A,B) = \frac{2 I(A,B)}{H(A) + H(B)},
\end{equation*}
where
\begin{equation*}
    I(A, B)=\sum_{i, j} \frac{\left|A_{i} \cap B_{j}\right|}{n} \log \left(\frac{n\left|A_{i} \cap B_{j}\right|}{\left|A_{i} \| B_{j}\right|}\right)
\end{equation*}
is the mutual information between the two groups, and the entropy of $A$ is defined as following
\begin{equation*}
    H(A) = \sum_{i} \frac{\left|A_{i}\right|}{n} \log \left(\frac{n}{\left|A_{i}\right|}\right).
\end{equation*}

Furthermore, to investigate the estimation accuracy, we calculate the mean squared error (MSE) of the proposed approach and report the ratio of oracle MSE to MSE of our method. A ratio closer to 1 indicates a better estimation performance. And the following formula is utilized to find corresponding MSE of the estimated $\hat y_{it}$:
\begin{equation*}
    MSE = \frac {1}{n} \sum_{i=1}^{n} \frac{1}{n_i}\sum_{t=1}^{n_i}(\hat y_{it} -y_{it})^2.
\end{equation*}

\textbf {\emph{Case \uppercase\expandafter{\romannumeral1}}}: In this setting, we consider the situation both $f_1$ and $f_2$ are linear functions,
\begin{equation*}
    y_{it} = f_1(x_{it1}) +f_2(x_{it2}) + Z_{it}^{\prime} b_i + \varepsilon_{it},i=1,\ldots,n,t=1,\ldots,n_i
\end{equation*}
where the true subgroup structure is
\begin{equation*}
    f_{1}(x)=\left\{\begin{array}{ll}
	3 x-1.5,&1 \leq i \leq n / 2 \\
	-5 x+2.5,& n / 2<i \leq n
\end{array},
f_{2}(x)=\left\{\begin{array}{ll}
	1.25 x-0.625, &1 \leq i \leq n / 2 \\
	-6 x+3, &n / 2<i \leq n
\end{array}\right.\right.
\end{equation*}
The covariates $X_{it1}, X_{it2}$ are both generated from $U(0,1)$, $Z_{it} \sim N(0,1)$. The vector of random effects $b_i \sim N(0,0.2)$ and $\varepsilon_{it}$ are generated from $N(0,0.1)$. We compare our method with two existing approaches, the model-based clustering and classification for longitudinal data method (\emph{longclust}) \citep{mcnicholas2016model}, which is based on gaussian model mixture to detect subgroups, and the k-means for longitudinal data method (\emph{kml}) \citep{genolini2010kml}, which implements k-means clustering based on generalized Fréchet distance to achieve subgrouping. Both \emph{longclust} approach and \emph{kml} approach are not suitable for unbalanced data, so in this setting we generate balanced data with $n_i=15, i=1,\ldots,n$.

Table 1 shows that the accuracy percentage ($\%$), NMI and Rand Index of the proposed method are equal to 1, which means the proposed method has a excellent performance in the linear setting considering the sample size $n=30, 50$ are relatively small. In addition, for \emph{kml} method, the accuracy percentage and Rand Index are around $50\%$, indicating it only correctly identified half of the sample. And the performance of longclust method is quite different under different sample sizes, when the sample size is larger than 100, it has a quite well performance, but for $n=30,50$, its performance become unsatisfactory, indicating its inapplicability to small sample size. As \emph{longclust} and \emph{kml} methods only use the information of response variable and time, and the proposed method includes other covariates and random effect to modeling the trajectories and pursuing subgroups, the proposed method can better characterize the structure of longitudinal data.

\begin{table}[] \centering
	\newcommand{\tabincell}[2]{\begin{tabular}{@{}#1@{}}#2\end{tabular}}
	\caption{Results for Case I. The accuracy percentage ($\%$), NMI, Rand Index for the proposed method, \emph{longclust} method and \emph{kml} method with working correlation matrix to be AR(0.3), AR(0.5), EX(0.3) and EX(0.5). The sample size is chosen to be 30, 50, 100 and 200 respectively. The values in the parentheses are the standard deviations.}
	\label{}
	\resizebox{\textwidth}{!}{
	\begin{tabular}{@{\extracolsep{5pt}}lccccccc}
		\\[-1.8ex]\hline
		\hline \\[-1.8ex]
		&\multirow{2}{*}{n}& \multicolumn{4}{c}{proposed method} &\multirow{2}{*}{longclust} &\multirow{2}{*}{kml}\\
		\cline{3-6} \\[-1.8ex]
		\multicolumn{1}{c}{} &  & \multicolumn{1}{c}{AR(0.3)}  &\multicolumn{1}{c}{AR(0.5)} & \multicolumn{1}{c}{EX(0.3)} & \multicolumn{1}{c}{EX(0.5)} \\
		\hline \\[-1.8ex]
		
		\multirow{8}{*}{$\%$}& 30 & 1 & 1 & 1 & 1 & 0.5496 & 0.5112  \\
		                     &    &(0)&(0)&(0)&(0)&(0.0040)  &(0.0120) \\
		                     & 50 & 1 & 1 & 1 & 1 & 0.600 & 0.5206  \\
		                     &    &(0)&(0)&(0)&(0)&(0.0030)  &(0.0060) \\
		                     & 100& 1 & 1 & 1 & 1 & 0.9330 & 0.5397  \\
		                     &    &(0)&(0)&(0)&(0)&(0.0024)&(0.0030)\\
		                     & 200& 1 & 1 & 1 & 1 & 0.9410 & 0.5667 \\
		                     &    &(0)&(0)&(0)&(0)&(0.0025)&(0) \\
		\hline \\[-1.8ex]
		\multirow{8}{*}{NMI}& 30 & 1 & 1 & 1 & 1 & 0.1016 &0.0050  \\
		                    &    &(0)&(0)&(0)&(0)&(0.0210)&(0.0127) \\
		                    & 50 & 1 & 1 & 1 & 1 & 0.1505 &0.0191\\
		                    &    &(0)&(0)&(0)&(0)&(0.012) &(0.0048)\\
		                    & 100& 1 & 1 & 1 & 1  &0.6936 & 0.0369\\
		                    &    &(0)&(0)&(0)&(0)&(0.0080)&(0.0017)\\
		                    & 200& 1 & 1 & 1 & 1  &0.7067 &0.0374  \\
		                    &    &(0)&(0)&(0)&(0)&(0.0030)&(0)\\		
		\hline \\[-1.8ex]
		\multirow{8}{*}{Rand Index}& 30 & 1 & 1 & 1 & 1 & 0.5000 &0.4852  \\
		                          &    &(0)&(0)&(0)&(0)&(0.0062)&(0.0183) \\
		                          & 50 & 1 & 1 & 1 & 1 & 0.5014 &0.4912\\
		                          &    &(0)&(0)&(0)&(0)&(0.0041) &(0.0117)\\
		                          & 100& 1 & 1 & 1 & 1  &0.8770 & 0.4953\\
		                          &    &(0)&(0)&(0)&(0)&(0.0024)&(0.0008)\\
		                          & 200& 1 & 1 & 1 & 1  &0.8916 &0.5026  \\
		                          &    &(0)&(0)&(0)&(0)&(0.0020)&(0.0004)\\		
		
		\hline \\[-1.8ex] 					
	\end{tabular}
}
\end{table}

\textbf {\emph{Case \uppercase\expandafter{\romannumeral2}}}: In this setting, we consider the scenario of unbalanced data, and let both $f_1$ and $f_2$ to be nonlinear functions,
\begin{equation*}
    y_{it} = f_1(x_{it1}) +f_2(x_{it2}) + Z_{it}^{\prime} b_i + \varepsilon_{it},i=1,\ldots,n,t=1,\ldots,n_i,
\end{equation*}
where the true subgroup structure is
\begin{equation*}
    f_1(x)=\begin{array}{l}
\begin{cases}
-1.75  arctan(5(x-0.6))-0.415,  &\text{when } i \text{ is odd}\\
2.5  (1-((x-0.75)/0.8)^2)^4-1.363, &\text{when } i \text{ is even}
\end{cases}
\end{array}
\end{equation*}
\begin{equation*}
    f_2= \begin{array}{l}
\begin{cases}
 2  sin(\pi x) -\frac{\pi}{4},  &1\leq i\leq n\\
 2  cos(\pi x), &n/2<i\leq n
\end{cases}
\end{array}
\end{equation*}
The covariates $X_{it1}, X_{it2}$ are both generated from $U(0,1)$, $Z_{it} \sim N(0,1)$. The vector of random effects $b_i \sim N(0,0.2)$ and $\varepsilon_{it}$ are generated from $N(0,0.1)$. $n_i$ is a random integer generated between 10 and 20. Since the \emph{kml} method and \emph{longclust} method are not suitable for this scenario, we only present the simulation result of the proposed approach.

Table 2 shows the simulation results for Case II. For each setup, we present the accuracy percentage on each covariate ($\%_{X_1},\%_{X_2}$) and total accuracy percentage ($\%_{total}$), as well as the NMI and Rand Index. Generally, when sample size $n$ increases, the performance of subgroup identification and estimation also shows an increasing trend. Each performance evaluation index is close to 1, which reflects the good clustering accuracy of the proposed method. 
Moreover, the MSE values of the proposed method are comparable to the oracle values, indicating our approach is able to detect subgroups and give a good estimate simultaneously.

\begin{table}[!htbp] \centering
	\newcommand{\tabincell}[2]{\begin{tabular}{@{}#1@{}}#2\end{tabular}}
	\caption{Results for Case II. The accuracy percentage on $X_1$ ($\%_{X_1}$), the accuracy percentage on $X_2$ ($\%_{X_2}$), the accuracy percentage of total population ($\%_{\text{total}}$), NMI, Rand Index and the ratio of oracle MSE to MSE of the proposed method with working correlation matrix to be AR(0.3), AR(0.5), EX(0.3) and EX(0.5). The sample size is chosen to be 30, 50, 100 and 200 respectively. The values in the parentheses are the standard deviations.}
	\label{}
	\begin{tabular}{@{\extracolsep{5pt}}lccccccc}
		\\[-1.8ex]\hline
		\hline \\[-1.8ex]
		\multicolumn{1}{c}{} & \multicolumn{1}{c}{n} & \multicolumn{1}{c}{$\%_{X_1}$} & \multicolumn{1}{c}{$\%_{X_2}$}
		&\multicolumn{1}{c}{$\%_{total}$} &\multicolumn{1}{c}{NMI} &\multicolumn{1}{c}{Rand Index} &\multicolumn{1}{c}{Ratio}  \\
		\hline \\[-1.8ex]
		
		\multirow{4}{*}{\textbf{AR(0.3)}} & 30 & 0.9863 & 0.9924  & 0.9850 & 0.9665 & 0.9859 & 0.9551  \\
		&    &(0.0268)&(0.0144) &(0.0269)&(0.0513)&(0.0240)&(0.2054) \\
		& 50 & 0.9882 & 0.9932  & 0.9870 & 0.9692 & 0.9868 & 0.9667 \\
		&    &(0.0148)&(0.0116) &(0.0169)&(0.0419)&(0.0169)&(0.2002)\\
		& 100& 0.9906 &0.9945   &0.9905  &0.9696  &0.9904  &0.9744 \\
		&    &(0.0093)&(0.0069) &(0.0098)&(0.0295)&(0.0098)&(0.1544)\\
		& 200& 0.9913 &0.9953   &0.9910  &0.9711  &0.9910  &0.9843 \\
		&    &(0.0072)&(0.0067)&(0.0075) &(0.0240)&(0.0074)&(0.0808)\\
		
		\cline{2-8}
		\multirow{4}{*}{\textbf{AR(0.5)}} & 30 & 0.9850 & 0.9920  & 0.9833 & 0.9639 & 0.9837 & 0.9524  \\
		&    &(0.0203)&(0.0157) &(0.0209)&(0.0447)&(0.0204)&(0.1986) \\
		& 50 & 0.9906 & 0.9936  & 0.9884 & 0.9642 & 0.9886 & 0.9612 \\
		&    &(0.0134)&(0.0152) &(0.0173)&(0.0381)&(0.0166)&(0.1343)\\
		& 100& 0.9912 &0.9942   &0.9892  &0.9675  &0.9888  &0.9631 \\
		&    &(0.0094)&(0.0099) &(0.0114)&(0.0309)&(0.0118)&(0.1199)\\
		& 200& 0.9914 &0.9946   &0.9902  &0.9724  &0.9895  &0.9701 \\
		&    &(0.0063)&(0.0055)&(0.00062)&(0.0219)&(0.0066)&(0.0645)\\
		
		\cline{2-8}
		\multirow{4}{*}{\textbf{EX(0.3)}} & 30 & 0.9970 & 0.9980  & 0.9970 & 0.9930 & 0.9970 & 0.9985  \\
		&    &(0.0096)&(0.0079) &(0.0096)&(0.0211)&(0.0096)&(0.0287) \\
		& 50 & 0.9977 & 0.9987  & 0.9977 & 0.9934 & 0.9977 & 0.9920 \\
		&    &(0.0046)&(0.0034) &(0.0047)&(0.0142)&(0.0046)&(0.0263)\\
		& 100& 0.9983 & 0.9992  &0.9983  &0.9940  &0.9983  &0.9927 \\
		&    &(0.0042)&(0.0030) &(0.0042)&(0.0112)&(0.0041)&(0.0231)\\
		& 200& 0.9991 &0.9995   &0.9991  &0.9976  &0.9991  &0.9953 \\
		&    &(0.0034)&(0.0021) &(0.0034)&(0.0112)&(0.0034)&(0.0195)\\
		
		\cline{2-8}
		\multirow{4}{*}{\textbf{EX(0.5)}} & 30 & 0.9970 & 0.9983  & 0.9970 & 0.9918 & 0.9970 & 0.9880  \\
		&    &(0.0096)&(0.0073) &(0.0096)&(0.0210)&(0.0095)&(0.0417) \\
		& 50 & 0.9975 & 0.9984  & 0.9975 & 0.9934 & 0.9974 & 0.9888 \\
		&    &(0.0067)&(0.0054) &(0.0067)&(0.0172)&(0.0067)&(0.0379)\\
		& 100& 0.9976 & 0.9988  &0.9976  &0.9935  &0.9977  &0.9891 \\
		&    &(0.0037)&(0.0025) &(0.0037)&(0.0118)&(0.0037)&(0.0229)\\
		& 200& 0.9986 &0.9993   &0.9986  &0.9958  &0.9986  &0.9931 \\
		&    &(0.0034)&(0.0023) &(0.0034)&(0.0109)&(0.0034)&(0.0203)\\
		\hline \\[-1.8ex] 					
	\end{tabular}
\end{table}

\textbf {\emph{Case \uppercase\expandafter{\romannumeral3}}}: In this setting, we consider the scenario that $f_1$ and $f_3$ are nonlinear functions, and $f_2$ are linear function, and we include variable $U$ as baseline covariate.
\begin{equation*}
   y_{it} = u_i \beta + f_1(x_{it1}) +f_2(x_{it2}) +f_3(x_{it3})+ b_i + \varepsilon_{it}, i=1,\ldots,n,t=1,\ldots,n_i,
\end{equation*}
where the true subgroup structure is $$f_1(x)= 2 cos(\pi x),\quad i=1,\ldots,n$$
$$f_2(x)=\begin{array}{l}
\begin{cases}
3x-1.5, &\text{when } i \text{ is odd}\\
-5x+2.5,  &\text{when } i \text{ is even}
\end{cases}
\end{array}$$
$$f_{3}(x)=\begin{array}{l}
\begin{cases}
-1.75 \times arctan(5(x-0.6))-0.415 , & 1\leq i\leq n/2 \\
2.5 \times (1-((x-0.75)/0.8)^2)^4-1.363,  & n/2<i\leq n
\end{cases}
\end{array}.$$
The covariates $X_{it1}, X_{it2},X_{it3}$ are generated from $U(0,1)$, $U_i \sim B(1,0.5)$, well the vector of random effects $b_i \sim N(0,0.2)$ and $\varepsilon_{it}$ are generated from $N(0,0.1)$. $n_i$ is a random integer generated between 10 and 20.

Table 3 presents the accuracy percentage ($\%$), NMI and Rand Index of our method when the true model is set with a mixture of linear functions and nonlinear functions, and when there may not exist subgroup. All these indexes are close to 1, demonstrating the good performance of the proposed method. Furthermore, we can see from Table 3 that the MSE values of the proposed method are close to the MSE of oracle estimators. The simulation of this scenario suggests that whether the relationship between the covariates and response variable is nonlinear or linear function, or whether there exist subgroups or not, the proposed approach could successfully identify the subgroup membership and then give consistent estimate of each additive component.

As the preceding examples, the proposed method is not only suitable for balanced data, but also applicable of unbalanced scenario.  Furthermore, besides identifying subgroups with high accuracy, the proposed method also gives good estimation of additive components. In a word, these results confirm that the proposed approach has good performances on finding subgroups of heterogeneous trajectories.

\begin{table}[!htbp] \centering
	\newcommand{\tabincell}[2]{\begin{tabular}{@{}#1@{}}#2\end{tabular}}
	\caption{Results for Case III. The accuracy percentage on $X_1$ ($\%_{X_1}$), the accuracy percentage on $X_2$ ($\%_{X_2}$), the accuracy percentage on $X_3$ ($\%_{X_3}$), the accuracy percentage of total population ($\%_{\text{total}}$), NMI, Rand Index and the ratio of oracle MSE to MSE of the proposed method with working correlation matrix to be AR(0.3), AR(0.5), EX(0.3) and EX(0.5). The sample size is chosen to be 30, 50, 100 and 200 respectively. The values in the parentheses are the standard deviations.}
	\label{}
	\resizebox{\textwidth}{!}{
		\begin{tabular}{@{\extracolsep{5pt}}lcccccccccc}
			\\[-1.8ex]\hline
			\hline \\[-1.8ex]
			\multicolumn{1}{c}{} & \multicolumn{1}{c}{n} & \multicolumn{1}{c}{$\%_{X_1}$} & \multicolumn{1}{c}{$\%_{X_2}$}
			& \multicolumn{1}{c}{$\%_{X_3}$} &\multicolumn{1}{c}{$\%_{total}$} &\multicolumn{1}{c}{NMI} &\multicolumn{1}{c}{Rand Index} &\multicolumn{1}{c}{Ratio}  \\
			\hline \\[-1.8ex]
			
			\multirow{4}{*}{\textbf{AR(0.3)}} & 30 &1  & 0.9907 & 0.9893  & 0.9880 & 0.9745 & 0.9885 & 0.9210  \\
			&    &(0)&(0.0158)&(0.0206) &(0.0209)&(0.0435)&(0.0198)&(0.1471) \\
			& 50 &1 & 0.9927 & 0.9927  & 0.9924 & 0.9745 & 0.9927 & 0.9285 \\
			&    &(0)&(0.0111)&(0.0145) &(0.0152)&(0.0339)&(0.0142)&(0.1417)\\
			& 100&1 & 0.9933 &0.9932   &0.9925  &0.9789  &0.9926  &0.9519 \\
			&    &(0)&(0.0089)&(0.0093) &(0.0094)&(0.0281)&(0.0093)&(0.1359)\\
			& 200&1 & 0.9942 &0.9934   &0.9930  &0.9817  &0.9930  &0.9765 \\
			&    &(0)&(0.0056)&(0.0056)&(0.0058) &(0.0189)&(0.0058)&(0.0906)\\
			
			\cline{2-9}
			\multirow{4}{*}{\textbf{AR(0.5)}} & 30 &1 & 0.9918 & 0.9920  & 0.9833 & 0.9639 & 0.9837 & 0.9524  \\
			&    &(0)&(0.0145)&(0.0157) &(0.0209)&(0.0447)&(0.0204)&(0.1986) \\
			& 50 &1 & 0.9940 & 0.9936  & 0.9884 & 0.9642 & 0.9886 & 0.9612 \\
			&    &(0)&(0.0121)&(0.0152) &(0.0173)&(0.0381)&(0.0166)&(0.1343)\\
			& 100&1 & 0.9953 &0.9942   &0.9892  &0.9675  &0.9888  &0.9631 \\
			&    &(0)&(0.0061)&(0.0099) &(0.0114)&(0.0309)&(0.0118)&(0.1199)\\
			& 200&1 & 0.9955 &0.9946   &0.9902  &0.9724  &0.9895  &0.9701 \\
			&    &(0)&(0.0063)&(0.0055)&(0.00062)&(0.0219)&(0.0066)&(0.0645)\\
			
			\cline{2-9}
			\multirow{4}{*}{\textbf{EX(0.3)}} & 30 &1 & 0.9986 & 0.9989  & 0.9986 & 0.9930 & 0.9951 & 0.9948  \\
			&    &(0)&(0.0051)&(0.0035) &(0.0051)&(0.0211)&(0.0129)&(0.0231) \\
			& 50 &1 & 0.9986 & 0.9992  & 0.9986 & 0.9934 & 0.9957 & 0.9949 \\
			&    &(0)&(0.0046)&(0.0035) &(0.0047)&(0.0142)&(0.0123)&(0.0196)\\
			& 100&1 & 0.9986 & 0.9993  & 0.9986 &0.9940  &0.9965  &0.9955 \\
			&    &(0)&(0.0041)&(0.0035) &(0.0041)&(0.0112)&(0.0101)&(0.0149)\\
			& 200&1 & 0.9993 & 0.9997   &0.9993 &0.9976  &0.9986  &0.9977 \\
			&    &(0)&(0.0025)&(0.0021) &(0.0026)&(0.0112)&(0.0090)&(0.0110)\\
			
			\cline{2-9}
			\multirow{4}{*}{\textbf{EX(0.5)}} & 30 &1 & 0.9983 & 0.9987  & 0.9964 & 0.9953 & 0.9969 & 0.9911  \\
			&    &(0)&(0.0072)&(0.0065) &(0.074) &(0.0142)&(0.0088)&(0.0371) \\
			& 50 &1 & 0.9984 & 0.9991  & 0.9983 & 0.9957 & 0.9983 & 0.9939 \\
			&    &(0)&(0.0056)&(0.0033) &(0.0056)&(0.0127)&(0.0055)&(0.0305)\\
			& 100&1 & 0.9985 & 0.9994  &0.9985  &0.9961  &0.9985  &0.9948 \\
			&    &(0)&(0.0042)&(0.0031) &(0.0042)&(0.0096)&(0.0042)&(0.0193)\\
			& 200&1 & 0.9989 &0.9994   &0.9989  &0.9984  &0.99989  &0.9962 \\
			&    &(0)&(0.0027)&(0.0018) &(0.0027)&(0.0072)&(0.0027)&(0.0103)\\
			\hline \\[-1.8ex] 		
			
		\end{tabular}
	}
\end{table}

\section{Application}

In this section, we apply the proposed method to primary biliary cirrhosis (PBC) data  collected between 1974 and 1984, which is available in R package ``joineRML". PBC is a chronic disease characterized by inflammatory destruction of the small bile ducts within the liver, which eventually leads to cirrhosis of the liver, followed by death. Patients often present abnormalities in their blood tests, such as elevated and gradually increased serum bilirubin. Characterizing the patterns of time courses of bilirubin levels is medical interest.
A total of 424 patients, referred to Mayo Clinic during that ten-year interval, met eligibility criteria for the randomized placebo-controlled trial of the drug D-penicilamine, 312 formal study participants, and 106 eligible nonenrolled subjects. This dataset contains multiple laboratory results, but only on the first 312 patients, among which 140 had died and the rest were censored and the sex ratio is at least $9:1$ (women to men).
Some baseline covariates such as age, gender, drug use indicator, were recorded at the beginning of the study. And various biomarkers were collected longitudinally, containing presence of hepatomegaly, spiders, edema and ascites, serum bilirubin, albumin, alkaline, phosphatase, serum glutamic-oxaloacetic transaminase (SGOT), platelets per cubic, prothrombin time and histologic stage of disease, etc. We would not study the problem of missing data in this paper, and the covariate cholesterol includes too much missing data, so we will ignore this covariate. Considering that we will employ splines in the subgrouping procedure, subjects whose repeated measurements less than or equal to 8 times will not be included, thus in our analysis 62 patients with total 687 records will be utilized.

Biomedical research indicates that serum bilirubin concentration is a primary indicator to help assess and track the absence of liver diseases. It is usually normal at diagnosis (0.1-1 mg/dl) but rises with histological disease progression \citep{talwalkar2003primary}. Therefore, we focus on modeling the relationship between serum bilirubin and other covariates of interest. And we set log-transformed serum bilirubin level as the response variable, since the original level has positive observed values \citep{murtaugh1994primary}.

There are some issues we intend to investigate (i) how the bilirubin levels evolve over time; (ii) how the evolution of bilirubin levels is related to other biomarkers; (iii) whether there exist any subgroups with different additive components among subjects and if there exist subgroups, how they differ from each other. Based on previous researches \citep{su2008interaction,ding2008modeling,tang2019discrete}, time-independent variables, \emph {age}(transformed to dummy variable) at registration, \emph{gender} (male and female), \emph{drug type} (placebo and D-pencillamine) are included in our model, as well the time-dependent variables, \emph{time, SGOT} and \emph {prothrombin}. We analyse the PBC dataset based on the following semiparametric additive mixed effect model:
\begin{equation*}
\begin{split}
log(bili_{it}) = & \alpha + \beta_1 gender_i + \beta_2 drug_i + \beta_3 age_i \notag\\
& + f_1(time_{it}) +f_2(log(SGOT_{it})) +f_3(log(prothrombin_{it}))+b_i,
\end{split}
\end{equation*}
where $i=1,\ldots,n, t=1,\ldots,n_i$. $b_i$ is random effect, and covariate \emph{time} is rescaled to $[0,1]$, \emph{gender} is binary gender indicator with 1 for female, \emph{drug} is binary treatment indicator with 1 for D-pencillamine.
After applying the proposed subgroup identification approach to PBC dataset, 3 subgroups have been found on covariate \emph{time}, and the BIC value of 3 subgroups reported by our algorithm is obviously less than other situations. Meanwhile, no subgroups have been detected on variables \emph{SGOT} and \emph{prothrombin}. There are 34 individuals in $\widehat \Omega_{1,1}$, 12 individual in $\widehat \Omega_{2,1}$ and 16 individuals in $\widehat \Omega_{3,1}$. Figure 1 presents the diverse functions of $\widehat f_1(time_{it})$, and it is obvious that these three functions have completely different trends. The time effect for people in subgroup 1 is always negative for $t \in [0,1]$, whereas the time effect for patients in subgroup 2 developed rapidly, indicating the severe condition of their illness, meanwhile the time effect for individuals in subgroup 3 developed towards a worse situation too, but not such severe like patients in subgroup 2.
\begin{figure}
	\centering
	\includegraphics[scale=0.42]{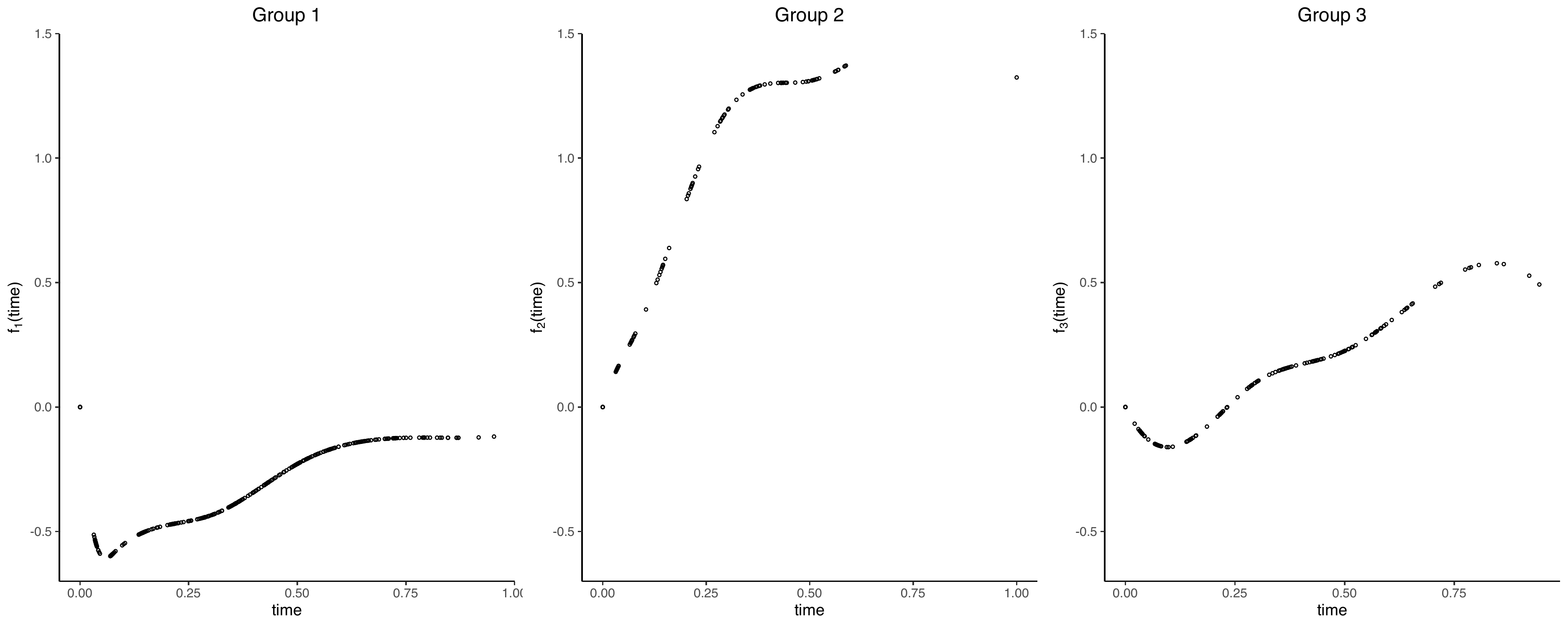}
	\caption{Plot of fitted $f(time)$ versus \emph{time} for subjects in Group 1, Group 2 and Group 3.}
	\label{fig}
\end{figure}

\begin{figure}
\centering
\includegraphics[scale=0.55]{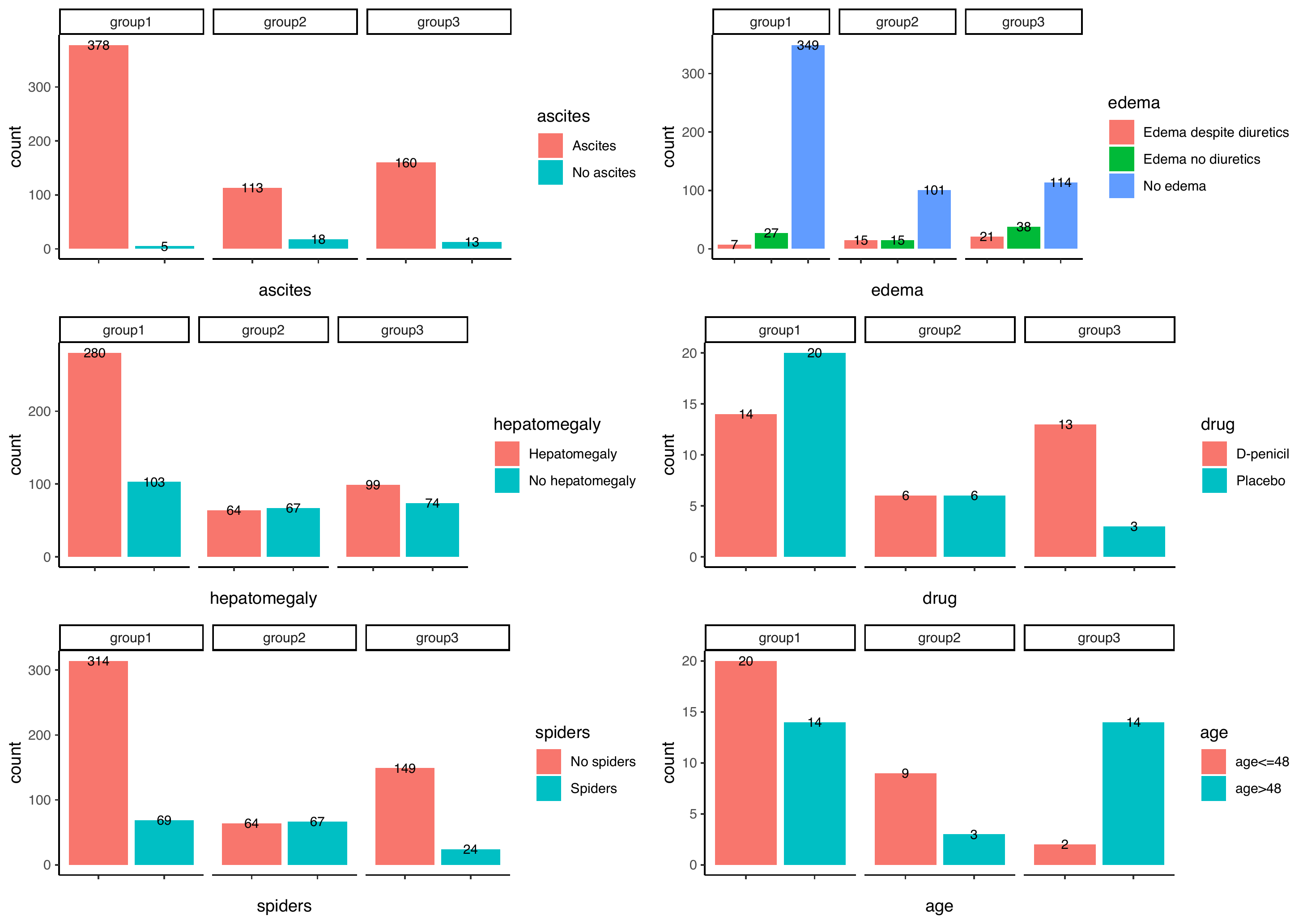}
\caption{The frequency of some baseline covariates and biomarkers (ascities, hepatomegaly, spiders, edema, drug, age) for subjects in different subgroups.}
\label{fig}
\end{figure}
The differences of the important biomarkers between three subgroups are shown in Figure 2 and Figure 3. We summarize the discrete biomarkers in histograms and the continuous biomarkers in box plots. We also conduct a $t$-test to compare the differences, and the result is displayed in Table 4. We could see that for some binary symptoms, such as \emph{ascites, hepatomegaly, spiders} and \emph{edema}, there are significant differences between these subgroups, demonstrating the heterogeneity of the subgroups detected by the proposed procedure. In addition, for the laboratory indexes collected longitudinally, like \emph{logserBilir, albumin, alkaline, SGOT, prothrombin}, there are also distinct differences between each subgroup. In particular, the average level of response variable \emph{logserBilir} for subgroup 2 is obviously higher than the patients in other groups, revealing their severity of illness, and this is consistent with our finding of $\widehat f_1(time)$ in Figure 1. The estimated model for each group is as follows:
\begin{equation*}
    \begin{aligned}
		\widehat {logbili}_{it} = & 0.833 -0.662 gender_i + 0.18 drug_i -0.177 age_i + \widehat f_{11}(time_{it}) I(i \in \widehat\Omega_{1,1}) \\
		& + \widehat f_{21}(time_{it}) I(i \in \widehat\Omega_{2,1})+ \widehat f_{31}(time_{it}) I(i \in \widehat\Omega_{3,1})\\
		& +\widehat f_2(log(SGOT_{it}))  + \widehat f_3(log(prothrombin_{it})) + \widehat b_i,\\
\end{aligned}
\end{equation*}
where
$$\begin{aligned}
& f_{11} = -0.692 B_{0,5} -0.446 B_{1,5} -0.505 B_{2,5} -0.055 B_{3,5} -0.138 B_{4,5} -0.119 B_{5,5}, \\
& f_{21} = 0.141 B_{0,5} +0.365 B_{1,5} + 1.402 B_{2,5} + 1.152 B_{3,5} + 1.861 B_{4,5} + 1.324 B_{5,5},\\
& f_{31} = -0.097 B_{0,5} -0.265 B_{1,5} + 0.188 B_{2,5} + 0.147 B_{3,5} + 0.755 B_{4,5} + 0.492 B_{5,5},\\
& f_2=-1.759 B_{0,5} - 1.057 B_{1,5} -0.692 B_{2,5} -0.171 B_{3,5} + 0.682 B_{4,5} + 0.082 B_{5,5},\\
& f_3= -0.072B_{0,5} -0.034 B_{1,5} -0.111 B_{2,5} +0.318 B_{3,5} + 1.150 B_{4,5} +1.275 B_{5,5}.\\
\end{aligned}$$

Throughout our method, the overall subjects have been partitioned to three segmentations. In this procedure, we take multiple covariates into consideration, and reveal the different time effect across three subgroups. In addition, we give the estimates of regression functions of each additive component in the model, which could capture the functional relationship between covariates and the response variable. The results provided by our approach may lead to more accurate subgrouping rule and may be helpful in making personalized medical decision for the patients who suffer from the disease.
\begin{figure}[h]
	\centering
	\includegraphics[scale=0.55]{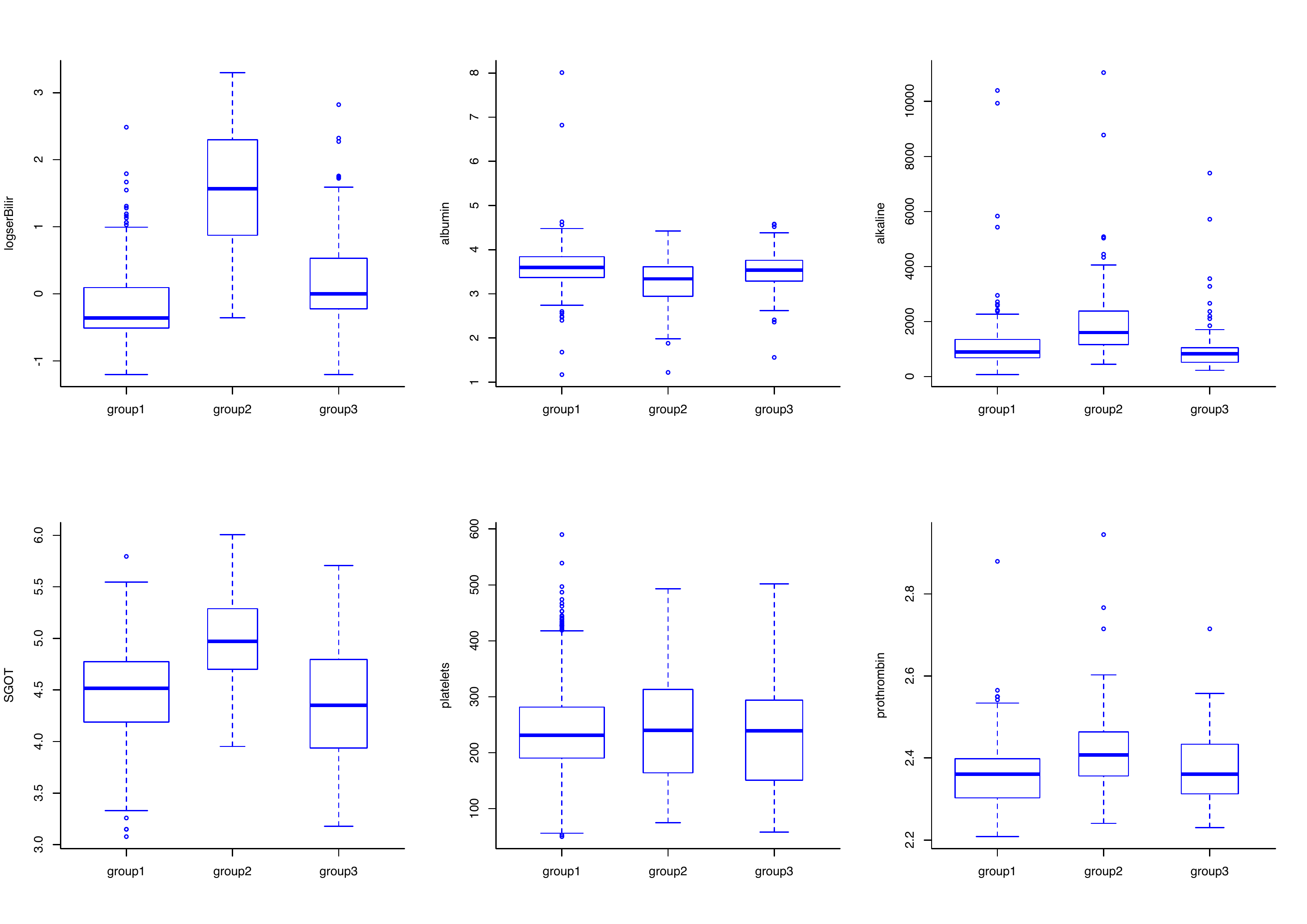}
	\caption{Box plots of biomarkers (\emph{logserBilir, albumin, alkaline, SGOT, platelets, prothrombin}) for subjects in different subgroups.}
\end{figure}

\begin{table}[!htbp] \centering
	\caption{The result of multiple t-test of the important biomarkers for individuals in different groups. $\mu_1,\mu_2$ and $\mu_3$ represent the mean of biomarkers of each subgroup. $p\text{-value}_{ij}$ is the $p$-value of multiple hypothesis test between group $i$ and group $j$.}
	\label{}
	\resizebox{\textwidth}{!}{
	\begin{tabular}{@{\extracolsep{5pt}}lcccccc}
		\\[-1.8ex]\hline
		\hline \\[-1.8ex]
		\multicolumn{1}{c}{Biomarkers} & \multicolumn{1}{c}{$\mu_1$} & \multicolumn{1}{c}{$\mu_2$} & \multicolumn{1}{c}{$\mu_3$}
		& \multicolumn{1}{c}{$\text{p-value}_{12}$} &\multicolumn{1}{c}{$\text{p-value}_{13}$} &\multicolumn{1}{c}{$\text{p-value}_{23}$}\\
		\hline \\[-1.8ex]
        ascites & 0.0131 & 0.1374 & 0.0751 & $\bm{7.6\times 10^{-8}}$ & $\bm{0.0039}$ & $\bm{0.0139}$ \\
        hepatomegaly & 0.2689 &	0.5115 & 0.4277 & $\bm{1.3 \times 10^{-6}}$ & $\bm{4.7 \times 10^{-4}}$ & 0.1237 \\
        spiders & 0.1802 & 0.5115 & 0.1387 & $\bm{4.7\times 10^{-15}}$ & 0.26 & $\bm{8.8\times 10^{-15}}$\\
        edema & 0.0535 & 0.1718 & 0.2312 & $\bm{2.8\times 10^{-5}}$ & $\bm{3.1 \times 10^{-12}}$ & 0.055 \\
        logserBilir & -0.2152 & 1.5681 & 0.2108 & $\bm{<2\times 10^{-16}}$ & $\bm{1.7 \times 10^{-12}}$ & $\bm{<2\times 10^{-16}}$ \\
        albumin & 3.6022 & 3.2886 & 3.5040 & $\bm{1.4\times 10^{-9}}$ & $\bm{0.0291}$ & $\bm{3.2 \times 10^{-4}}$ \\
        alkaline & 1106.57 & 1955.98 & 943.60 & $\bm{2.8 \times 10^{-16}}$ & 0.073 & $\bm{<2\times 10^{-16}}$\\
        SGOT & 4.4776 & 4.9641 & 4.3792 & $\bm{<2\times10^{-16}}$ & $\bm{0.03}$ & $\bm{<2\times 10^{-16}}$ \\
        prothrombin & 2.3594 & 2.4185 & 2.3722 & $\bm{2.1 \times 10^{-12}}$ & 0.087 & $\bm{2.3 \times 10^{-6}}$\\

		\hline \\[-1.8ex]
	\end{tabular}
	}
\end{table}

\section{Discussion}
This article introduced a novel framework of subgroup identification for longitudinal data, based on semiparametric additive mixed effect model. We aim at finding subgroups on each covariate to reflect the diverse relationship between each covariate with the response variable. It is of great interest to describe the various association between covariates and the response variable, which could reveal how each covariate attributes to the subgroups.  Numerical studies indicate that the proposed approach is effective in identifying subgroups and estimating the nonparametric regression functions simultaneously, for both balanced data and unbalanced data. 

As continuous response has been considered in this work, one potential future work is to extend the proposed framework to discrete longitudinal outcomes. Second, when the dimension of covariates is high, it would cost more time to adopt the proposed method. So another open issue for future research is extending the framework to high dimensional additive model \citep{panagiotelis2008bayesian,meier2009high,fan2011nonparametric}. In addition, dealing with the dataset where interactions between covariates exist is also a possible direction. These issues are challenging but deserve further exploration.

\bibliographystyle{apalike}
\bibliography{ref}

\section*{Appendix}

\begin{lemma}
\label{lem1}
Without loss of generality, we consider the following B-spline basis functions that span $G$, that is, $B_k= J_n^{1/2} S_k,k=1,\ldots, J_n$, where $\{S_k\}_{k=1}^{J_n}$ are the B-splines defined in Chapter 5 of \cite{lorentz1993constructive}. It follows from Theorem 4.2 f \cite{lorentz1993constructive} that
$$M_1 \| \gamma\|_2^2 \leq \int \{ \sum_{k=1}^{J_n} B_k(t) \gamma_k \}^2 dt \leq M_2 \| \gamma \|_2^2$$
for some constant $0< M_1 < M_2 <\infty$, where $\gamma = (\gamma_1,\ldots,r_{J_n})^{\prime}$.
\end{lemma}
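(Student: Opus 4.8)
The plan is to reduce the claimed two-sided bound to the standard $L_2$ stability estimate for B-splines recorded as Theorem 4.2 of \cite{lorentz1993constructive}, and then to track how the normalization $B_k = J_n^{1/2} S_k$, together with the bounded mesh ratio condition (A2), converts the mesh-weighted coefficient norm appearing in that theorem into the plain Euclidean norm $\|\gamma\|_2$. First I would rewrite the spline in terms of the unnormalized B-splines $S_k$: since $B_k = J_n^{1/2} S_k$, setting $a_k = J_n^{1/2}\gamma_k$ gives $\sum_k \gamma_k B_k = \sum_k a_k S_k$, so that $\int\{\sum_k \gamma_k B_k(t)\}^2\,dt = \int\{\sum_k a_k S_k(t)\}^2\,dt$ while $\|a\|_2^2 = J_n\|\gamma\|_2^2$. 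The task thus shifts to controlling the $L_2$ norm of $\sum_k a_k S_k$ in terms of $\|a\|_2$.

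Next I would invoke Theorem 4.2 of \cite{lorentz1993constructive}, which supplies constants $0 < c_1 < c_2 < \infty$ depending only on the spline degree $d$ such that $c_1 \sum_k h_k a_k^2 \le \int\{\sum_k a_k S_k(t)\}^2\,dt \le c_2 \sum_k h_k a_k^2$, where $h_k$ is proportional to the length of the support of $S_k$, equivalently to the local knot spacing. The bounded mesh ratio assumption (A2), applied to the fixed interval $[a,b]$ partitioned by $J_n$ interior knots, forces every subinterval length, and hence every $h_k$, to be comparable to $J_n^{-1}$; that is, there are constants $0 < \underline{c} \le \overline{c} < \infty$ with $\underline{c}\,J_n^{-1} \le h_k \le \overline{c}\,J_n^{-1}$ for all $k$. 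Substituting yields $c_1\underline{c}\,J_n^{-1}\|a\|_2^2 \le \int\{\sum_k a_k S_k\}^2\,dt \le c_2\overline{c}\,J_n^{-1}\|a\|_2^2$, and replacing $\|a\|_2^2 = J_n\|\gamma\|_2^2$ cancels the factor $J_n^{-1}$, giving the asserted inequality with $M_1 = c_1\underline{c}$ and $M_2 = c_2\overline{c}$.

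The only delicate point, and thus the thing to get right rather than a genuine obstacle, is verifying that the mesh-weighted norm $\sum_k h_k a_k^2$ collapses to a fixed multiple of $\|a\|_2^2$: this is exactly where (A2) enters, since without control of the mesh ratio the weights $h_k$ could degenerate and the equivalence constants would then depend on $J_n$. Because (A2) is assumed and the domain is the fixed interval $[0,1]$, the $h_k$ stay uniformly of order $J_n^{-1}$, the resulting $M_1$ and $M_2$ are genuinely absolute constants independent of $J_n$ and $\gamma$, and the argument closes.
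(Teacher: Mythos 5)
Your proposal is correct and takes essentially the same approach as the paper: the paper gives no argument beyond citing Theorem~4.2 of \cite{lorentz1993constructive}, and your write-up supplies exactly the bookkeeping that citation compresses --- the mesh-weighted $L_2$ stability bound $\int\{\sum_k a_k S_k\}^2\,dt \asymp \sum_k h_k a_k^2$, the observation that condition (A2) on a fixed interval forces every local knot spacing $h_k$ to be of order $J_n^{-1}$, and the cancellation of that factor against the $J_n^{1/2}$ normalization of the basis, yielding constants $M_1,M_2$ independent of $J_n$ and $\gamma$. Nothing essential is missing.
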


\begin{lemma}
\label{lem2}
For each $i$, there exist some constants $0<M_1<M_2<\infty$ such that, except on an event whose probability tends to zero, all the eigenvalues of $\bm X_i^{\prime} \bm X_i/n_i$ fall between $M_1$ and $M_2$.
\end{lemma}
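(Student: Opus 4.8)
The plan is to reinterpret the eigenvalue bounds as a statement about the empirical Gram matrix of the rescaled B-spline basis and to compare it against its population analogue. Let $\bm X_i$ be the $n_i\times J_n$ design matrix whose $(t,k)$ entry is $B_k(x_{it})$, with $\{B_k\}$ the rescaled basis $B_k=J_n^{1/2}S_k$ from Lemma \ref{lem1}. For any $\gamma\in\mathbb{R}^{J_n}$,
\begin{equation*}
\gamma^{\prime}\!\left(\bm X_i^{\prime}\bm X_i/n_i\right)\gamma=\frac{1}{n_i}\sum_{t=1}^{n_i}\Big(\sum_{k=1}^{J_n}B_k(x_{it})\gamma_k\Big)^{2}=\|s(\cdot,\gamma)\|_{n_i}^{2},
\end{equation*}
so the extreme eigenvalues of $\bm X_i^{\prime}\bm X_i/n_i$ coincide with the maximum and minimum of the empirical norm $\|s(\cdot,\gamma)\|_{n_i}^2$ over the unit sphere $\{\|\gamma\|_2=1\}$. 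It therefore suffices to bound this empirical norm above and below, uniformly in $\gamma$.

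First I would handle the population version. Writing $\bm b(x)=(B_1(x),\ldots,B_{J_n}(x))^{\prime}$ and $\Gamma_i=E[\bm b(X)\bm b(X)^{\prime}]$, we have $\gamma^{\prime}\Gamma_i\gamma=\int s(x,\gamma)^2 g_X(x)\,dx$. By Assumption (A3), $a_1\int s^2\,dx\le\int s^2 g_X\,dx\le a_2\int s^2\,dx$, while Lemma \ref{lem1} gives $M_1\|\gamma\|_2^2\le\int s^2\,dx\le M_2\|\gamma\|_2^2$. Combining the two, the eigenvalues of $\Gamma_i$ lie in the fixed interval $[a_1M_1,\,a_2M_2]$, bounded away from $0$ and $\infty$. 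The claim then reduces to showing that the empirical Gram matrix concentrates around $\Gamma_i$ in operator norm.

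The remaining and main step is this concentration. Setting $D_i=\bm X_i^{\prime}\bm X_i/n_i-\Gamma_i$, I would exploit the local support of B-splines: each $B_k$ is supported on at most $d+1$ adjacent subintervals, so $D_i$ is banded with $O(1)$ nonzero entries per row, and a Gershgorin-type argument gives $\|D_i\|_{\mathrm{op}}\le C\max_{k,k'}|D_i[k,k']|$. Each entry is a centered average of i.i.d.\ bounded terms $B_k(x_{it})B_{k'}(x_{it})$; since the rescaling yields $\|B_k\|_\infty=O(J_n^{1/2})$, these summands are $O(J_n)$ in range and, using the mesh-ratio condition (A2) with (A3), have variance $O(J_n)$. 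A Bernstein inequality together with a union bound over the $O(J_n)$ nonzero entries then yields $\|D_i\|_{\mathrm{op}}=O_p\big(\sqrt{J_n\log J_n/n_i}\big)$, which tends to $0$ under the rate $J_n=O(n_i^{\varsigma})$, $0<\varsigma<1$. Consequently, with probability tending to one the eigenvalues of $\bm X_i^{\prime}\bm X_i/n_i$ lie between $a_1M_1/2$ and $2a_2M_2$, and one takes these as the constants in the statement (relabeling them $M_1,M_2$). The hard part is precisely this operator-norm concentration for a growing-dimension random Gram matrix: one must control the dependence of the entries on the shared design, the inflation $\|B_k\|_\infty=O(J_n^{1/2})$ from the rescaling, and the growth of $J_n$ relative to $n_i$ simultaneously, which is what forces the sub-linear knot-growth rate and makes the bad-event probability vanish only as $n_i\to\infty$.
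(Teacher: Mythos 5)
The paper itself offers no argument for this lemma: it simply defers to Theorem 7.3 of \cite{roussas1987moment}. Your proof is therefore a genuinely different, self-contained route, and it is the standard one in the spline literature: identify the extreme eigenvalues of $\bm X_i^{\prime}\bm X_i/n_i$ with the extrema of the Rayleigh quotient $\|s(\cdot,\gamma)\|_{n_i}^2$ over the unit sphere, pin down the population Gram matrix by combining the density bounds in (A3) with the norm equivalence of Lemma \ref{lem1} (eigenvalues in $[a_1M_1,\,a_2M_2]$), and then prove operator-norm concentration of the empirical Gram matrix around its population version using the local support of B-splines (bandedness, hence a Gershgorin-type bound), a Bernstein inequality for each entry, and a union bound over the $O(J_n)$ nonzero entries. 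Your rate accounting is also right: after the $J_n^{1/2}$ rescaling the entries have range and variance $O(J_n)$, support overlaps have probability $O(J_n^{-1})$ under (A2)--(A3), and $\|D_i\|_{\mathrm{op}}=O_p\big(\sqrt{J_n\log J_n/n_i}\big)\to 0$ under $J_n=O(n_i^{\varsigma})$ with $\varsigma<1$. What your route buys is transparency about exactly where the knot-growth condition enters; what the paper's citation presumably buys is validity under dependence, which is the one substantive caveat below.

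Your Bernstein-plus-union-bound step treats $x_{i1},\ldots,x_{in_i}$ as independent draws from $g_X$. Assumption (A3) asserts only that the design points follow a common density; for longitudinal data the within-subject design points (e.g., observation times) are typically dependent, and the result the paper cites is a moment-inequality result for dependent sequences, suggesting the lemma is intended to hold in that setting. To cover it, you would replace the Bernstein step with a blocking or moment inequality for mixing arrays; the architecture of your proof survives unchanged, only the concentration tool does. A second, smaller point: your reading of $\bm X_i$ as the design matrix of a \emph{single} covariate's rescaled basis is in fact the only reading under which the lemma can hold as stated. If $\bm X_i$ were the stacked basis over all $p$ covariates, as the notation $\bm U_i=(\bm S_i,\bm X_i)$ in the proof of Lemma \ref{lem4} might suggest, then the partition-of-unity property of B-splines (each block's basis functions sum to the same constant) yields an exact null vector of $\bm X_i^{\prime}\bm X_i$, so the smallest eigenvalue is $0$ unless centered or constrained bases are used; moreover the cross-covariate blocks are dense, so the Gershgorin step would also fail. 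Stating this interpretation explicitly would close the gap between what you prove and what the paper's notation leaves ambiguous.
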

\noindent The proof of Lemma 2 can be referred to Theorem 7.3 of \cite{roussas1987moment}.

\begin{lemma}
\label{lem3}
Define $f_i^{*}(t)=\bm B(t) \gamma^{*} \in G$ such that $\|f_i^{*}(t)-f_i(t)\|_2^2=inf_{g \in G} \|g(t) - f_i(t)\|_2^2 \triangleq \varpi_{i}=J_n^{-2r}$ if $f_i \in \mathcal H_r$, which follows the result of page 149 of \cite{de1978practical}.
\end{lemma}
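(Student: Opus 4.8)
The plan is to treat this as a direct consequence of classical spline approximation theory, since the statement is precisely a Jackson-type bound on the best $L_2$ spline approximation of a smooth function. Fix an index $i$ and recall that, by Assumption (A1), $f_i \in \mathcal{H}_r = C^r[0,1]$, while $G$ denotes the space of splines of degree $d$ associated with the knot sequence $\{\xi_l\}_{l=0}^{J_n+1}$. Writing $h_n = \max_{0 \le l \le J_n} |\xi_{l+1}-\xi_l|$ for the mesh size, the target is to establish $\inf_{g \in G}\|g - f_i\|_2 = O(J_n^{-r})$, which upon squaring delivers $\varpi_i = O(J_n^{-2r})$.

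First I would invoke the quasi-interpolation estimate of \cite{de1978practical} (p.~149): provided the degree satisfies $d \ge r-1$, there exist a constant $C$ depending only on $d$ and $r$ and a spline $g_n \in G$ such that $\|g_n - f_i\|_\infty \le C\, h_n^{\,r}\, \|f_i^{(r)}\|_\infty$. This is the standard local-polynomial-reproduction bound, obtained by approximating $f_i$ by its Taylor polynomial of degree $d$ on each subinterval $[\xi_{l-1},\xi_l]$ and using that splines of degree $d$ reproduce polynomials of degree up to $d$. Next I would convert the mesh size to $J_n$: by the bounded mesh-ratio condition (A2), the $J_n+1$ subintervals of $[0,1]$ have comparable lengths, so $h_n \le C'/(J_n+1) = O(J_n^{-1})$. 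Passing from the sup-norm to the $L_2$-norm on the bounded interval gives $\|g_n - f_i\|_2 \le \|g_n - f_i\|_\infty = O(J_n^{-r})$. Finally I would note that, since $f_i^{*}$ is by construction the $L_2$-minimizer over $G$, one has $\|f_i^{*} - f_i\|_2 \le \|g_n - f_i\|_2 = O(J_n^{-r})$, and squaring yields the claimed rate $\varpi_i = \|f_i^{*} - f_i\|_2^2 = O(J_n^{-2r})$.

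The main obstacle here is not analytic depth but the careful matching of hypotheses: one must ensure the spline degree obeys $d \ge r-1$ so that the attainable Jackson order equals $r$ (splines of degree $d$ deliver approximation order $d+1$, while only the leading $r$ derivatives of $f_i$ are controlled under (A1)), and one must genuinely verify the equivalence $h_n \asymp J_n^{-1}$ from (A2) rather than assuming a uniform grid. Given these two bookkeeping points, the result follows from \cite{de1978practical} together with \cite{lorentz1993constructive}, and no further estimates are required; the lemma then supplies the bias term of order $J_n^{-2r}$ that appears in Theorems \ref{th1} and \ref{th2}.
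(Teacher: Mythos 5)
Your proof is correct and takes essentially the same route as the paper, which gives no argument at all beyond the citation: Lemma \ref{lem3} is precisely the Jackson-type bound of de Boor (p.~149) that you invoke, combined with the mesh-ratio consequence $h_n = O(J_n^{-1})$ from (A2), the domination of the $L_2$ norm by the sup norm on a bounded interval, and the definition of $f_i^{*}$ as the best $L_2$ approximant in $G$. Your two bookkeeping points are genuine refinements the paper leaves implicit: the degree condition $d \ge r-1$ is indeed needed for the approximation order to reach $r$, and the stated equality $\varpi_i = J_n^{-2r}$ should properly be read as the upper bound $\varpi_i = O(J_n^{-2r})$ (exact equality can fail, e.g.\ if $f_i$ itself lies in $G$), which is all that is actually used downstream in Lemma \ref{lem4} and Theorems \ref{th1} and \ref{th2}.
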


\begin{lemma}
\label{lem4}
Under Conditions (A1)-(A5), as $n \rightarrow \infty$ and if $J_n\rightarrow \infty, J_n^4=o(n_0)$, then the B-splines coefficients we import into $k$-means satisfy
$$ \| \widehat \gamma_{ij}^{(1)} - \gamma_{ij} \|_2^2 = O_p(J_n/n_i + J_n^{-2r}),$$
where $i=1,\ldots,n,j=1,\ldots,p$, $\gamma_{ij}$ denotes the true B-splines coefficients for the $j$ additive component of the $i$ subject, and $\widehat \gamma_{ij}^{(r)}$ represents the estimate of $\gamma_{ij}$ at the first iteration.
\end{lemma}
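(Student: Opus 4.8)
The plan is to carry out the classical bias–variance analysis of a weighted least–squares spline fit, applied subjectwise, with the within–subject correlation and the working–covariance misspecification handled through the eigenvalue bounds of (A5) and the conditioning provided by Lemma \ref{lem2}. First I would rewrite the partial residual used in the first backfitting step. By model \eqref{1} and the definition \eqref{6}, for a subject $i$ lying in any group on $X_j$,
\[
\bm W_{ij}=\bm B_{ij}\gamma_{ij}+\bm\delta_{ij}+\bm Z_i b_i+\bm\varepsilon_i+\bm R_i,
\]
where $\gamma_{ij}$ are the true (best–approximation) coefficients, $\bm\delta_{ij}$ is the vector of spline approximation errors $m_{ij}(X_{itj})-\bm B(X_{itj})'\gamma_{ij}$, and $\bm R_i$ collects the initialization remainder $\bm S_i(\beta-\widehat\beta)+\sum_{k\neq j}\{f_k(X_{itk})-\widehat f_k^{(0)}(X_{itk})\}$. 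Substituting into the closed form \eqref{8} gives the exact decomposition
\[
\widehat\gamma_{ij}^{(1)}-\gamma_{ij}=[\bm B_{ij}'\bm V_i\bm B_{ij}]^{-1}\bm B_{ij}'\bm V_i\big(\bm\delta_{ij}+\bm Z_i b_i+\bm\varepsilon_i+\bm R_i\big),
\]
so it suffices to bound the three contributions separately.

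Second, I would establish the conditioning of the design. Applying Lemma \ref{lem2} with $\bm X_i=\bm B_{ij}$, the eigenvalues of $\bm B_{ij}'\bm B_{ij}/n_i$ lie in $[M_1,M_2]$ outside an event of vanishing probability; the condition $J_n^4=o(n_0)$ is precisely what forces the empirical Gram matrix to concentrate around its expectation uniformly over the smallest subjects. Combined with the bounded eigenvalues of $\bm V_i$ and $\bm\Sigma_i$ from (A5), routine operator–norm manipulations give $\|[\bm B_{ij}'\bm V_i\bm B_{ij}]^{-1}\|_{op}=O_p(1/n_i)$ together with the sandwich $[\bm B_{ij}'\bm V_i\bm B_{ij}]^{-1}\bm B_{ij}'\bm B_{ij}[\bm B_{ij}'\bm V_i\bm B_{ij}]^{-1}\preceq c\,[\bm B_{ij}'\bm B_{ij}]^{-1}$, which will drive the variance bound.

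Third, I would bound the three terms. For the approximation term, writing $\bm c=[\bm B_{ij}'\bm V_i\bm B_{ij}]^{-1}\bm B_{ij}'\bm V_i\bm\delta_{ij}$, the vector $\bm B_{ij}\bm c$ is the $\bm V_i$–projection of $\bm\delta_{ij}$ onto the column span, so by Lemma \ref{lem1} and the equivalence of empirical and theoretical $L_2$ norms guaranteed by (A3), $\|\bm c\|_2^2\lesssim n_i^{-1}\|\bm\delta_{ij}\|_2^2\asymp\|\delta_{ij}\|_{L_2}^2=O(J_n^{-2r})$ by Lemma \ref{lem3} and (A1). For the stochastic term, let $\bm\eta_i=\bm Z_ib_i+\bm\varepsilon_i$ with $\operatorname{Cov}(\bm\eta_i)=\bm\Sigma_i$, and set $\bm H=[\bm B_{ij}'\bm V_i\bm B_{ij}]^{-1}\bm B_{ij}'\bm V_i$. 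I would compute the expected squared norm through the trace,
\[
E\|\bm H\bm\eta_i\|_2^2=\operatorname{tr}(\bm H\bm\Sigma_i\bm H')\le c\,\operatorname{tr}([\bm B_{ij}'\bm B_{ij}]^{-1})=O(J_n/n_i),
\]
using the (A5) eigenvalue bounds on $\bm V_i$ and $\bm\Sigma_i$ to clear the weighting matrices and the sandwich from the previous step, and Markov's inequality to pass to $O_p(J_n/n_i)$. The misspecification $\bm V_i\neq\bm\Sigma_i$ only changes the constant, which is the robustness property noted after \eqref{8} via \cite{huang2007efficient}. Finally, for $\bm R_i$ I would invoke the consistency of the pooled initial estimator $(\widehat\beta,\widehat f_k^{(0)})$ to argue that its contribution is of order no larger than the pooled rate $O_p(J_n/N+J_n^{-2r})$, hence dominated. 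Collecting the three bounds yields $\|\widehat\gamma_{ij}^{(1)}-\gamma_{ij}\|_2^2=O_p(J_n/n_i+J_n^{-2r})$.

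The hard part will be the remainder $\bm R_i$: the initial fit is computed under the (false) working assumption of a single group, so $\widehat f_k^{(0)}$ targets a population–averaged function rather than the subject's group–specific $f_k$, and one must argue that the resulting discrepancy does not inflate the stated rate. Controlling this term uniformly in $i$—together with verifying that the empirical-to-theoretical norm equivalence and the Lemma \ref{lem2} conditioning hold simultaneously across all $n$ subjects under $J_n^4=o(n_0)$—is where the bulk of the technical care lies; the pure bias–variance computation for a single well-conditioned subject is otherwise routine.
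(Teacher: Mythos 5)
Your proposal follows essentially the same route as the paper's proof: both start from the closed-form weighted least-squares expression for $\widehat\gamma_{ij}^{(1)}$, decompose the error into spline-approximation bias, stochastic noise from $\bm Z_i b_i+\bm\varepsilon_i$, and initialization error, and bound these terms via trace/eigenvalue arguments under (A5) together with the de Boor approximation rate from Lemma \ref{lem3}, arriving at $O_p(J_n/n_i+J_n^{-2r})$. The only differences are organizational---the paper splits your remainder $\bm R_i$ into two separate terms, $I_1$ (from $\beta-\widehat\beta$) and $I_3$ (from the other components' initial coefficients), and bounds each directly at $O_p(J_n/n_i)$ rather than through the pooled rate---and the caveat you flag about the single-group initial fit targeting a population-averaged function is a genuine subtlety that the paper's own proof quietly assumes away by treating the largest eigenvalue of the outer product of $\sum_{k\neq j}(f_{ik}-\bm B_{ik}\widehat\gamma_k^{(0)})$ as bounded.
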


\begin{remark}
In Lemma 4, we investigate the convergence property of our initial estimator, which means that in the first iteration, the B-spline coefficients we input into $k$-means algorithm is close to the true coefficients as long as the repeated measurements of each subjects are sufficiently large, and this is consistent with our simulation studies. Moreover, we have noticed that the average mean squared error of our estimated coefficients is determined by two parts. The first part consists of average variance and approximation bias from the random effects, and the second part represents average squared shrinkage bias. Lemma 4 guarantees that the estimated subgroups $\widehat \Omega_{k,j}$ equal to the true subgroups $\Omega_{k,j}$ with probability $1$.
\end{remark}

\begin{proof}[Proof of Lemma \ref{lem4}]
Denote $\bm U_i = (\bm S_i, \bm X_i)$, following the idea of \cite{huang2007efficient}, the initial estimates are
$$\left(\begin{array}{l}
\widehat{\beta} \\
\widehat{\gamma}
\end{array}\right) = (\sum_{i=1}^{n} \bm U_i^{\prime} \bm V_i^{-1} \bm U_i)^{-1} \sum_{i=1}^{n} \bm U_i^{\prime} \bm V_i^{-1} \bm Y_i.$$
Let
$$
\sum_{i=1}^{n} {\bm U}_{i}^{\prime} \bm V_{i}^{-1} {\bm U}_{i}=\left(\begin{array}{ll}
\sum_{i=1}^{n} {\bm S}_{i}^{\prime} \bm V_{i}^{-1} {\bm S}_{i} & \sum_{i=1}^{n} {\bm S}_{i}^{\prime} \bm V_{i}^{-1} {\bm X}_{i} \\
\sum_{i=1}^{n} {\bm X}_{i}^{\prime} \bm V_{i}^{-1} {\bm S}_{i} & \sum_{i=1}^{n} {\bm X}_{i}^{\prime} \bm V_{i}^{-1} {\bm S}_{i}
\end{array}\right) \triangleq\left(\begin{array}{ll}
H_{11} & H_{12} \\
H_{21} & H_{22}
\end{array}\right),
$$
and it follows that well-known block matrix forms of matrix inverse that
$$
\left(\begin{array}{cc}
H_{11} & H_{12} \\
H_{21} & H_{22}
\end{array}\right)^{-1}=\left(\begin{array}{cc}
H^{11} & H^{12} \\
H^{21} & H^{22}
\end{array}\right)=\left(\begin{array}{cc}
H_{11 \cdot 2}^{-1} & -H_{11 \cdot 2}^{-1} H_{12} H_{22}^{-1} \\
-H_{22 \cdot 1}^{-1} H_{21} H_{11}^{-1} & H_{22 \cdot 1}^{-1}
\end{array}\right),
$$
where $H_{11 \cdot 2}=H_{12} H_22^{-1} H_{21}$ and $H_{11 \cdot 2} = H_{22} -H_{21} H_{11}^{-1} H_{12}.$ Consequently,
$$
\widehat{\beta}=H^{11}\left\{\sum_{i=1}^{n} {\bm S}_{i}^{\prime} \bm V_{i}^{-1} {\bm Y}_{i}-H_{12} H_{22}^{-1} \sum_{i=1}^{n} {\bm X}_{i}^{\prime} \bm V_{i}^{-1} {\bm Y}_{i}\right\},
$$
$$
\widehat \gamma^{(0)} = H^{22}\left\{ \sum_{i=1}^{n} \bm X_i^{\prime} \bm V_i^{-1} \bm Y_i - H_{21} H_{11}^{-1} \sum_{i=1}^{n} \bm S_i \bm V_i^{-1} \bm Y_i \right \},
$$
where $\widehat \gamma^{(0)}= (\hat \gamma_1^{(0)},\ldots, \widehat \gamma_p^{(0)})^{\prime}$.
Next we are going to prove that the initial estimate of B-splines coefficients are bounded with probability and let us take $j=1$ as example in the following proof.
In the backfitting procedure, we define
$$\bm W_{i1}=\bm Y_i - \bm S_i \widehat \beta - \sum_{k=2}^{p} \bm B_{ik} \hat \gamma_k^{(0)},$$
where $\bm X_{ik}=(X_{i1k},\ldots, X_{i{n_i}k})$, and update $ f_{i1} = S(\bm W_{i1} | x_{i1})$ with B-splines, which means in this situation we fit
\begin{align*}
	\bm W_{i1} & = f_{i1}(\bm X_{i1}) + \bm Z_i b_i + \bm \varepsilon_i  \\
           & \approx \bm B_{i1} r_{i1} + \bm Z_i b_i + \bm \varepsilon_i.
\end{align*}
So the initial estimates we import into k-means algorithm are
$$\widehat \gamma_{i1}^{(1)} = (\bm B_{i1}^{\prime} \bm V_i^{-1} \bm B_{i1} )^{-1} \bm B_{i1}^{\prime} \bm V_i^{-1} \bm W_{i1}.$$
Thus,
\begin{align*}
	& \|\widehat \gamma_{i1}^{(1)}- \gamma_{i1}^{*}\|^2_2 = \|(\bm B_{i1}^{\prime} \bm V_i^{-1} \bm B_{i1} )^{-1} \bm B_{i1}^{\prime} \bm V_i^{-1} \bm W_{i1}- \gamma_{i1}^{*}\|_2^2 \\
	& = \| (\bm B_{i1}^{\prime} \bm V_i^{-1} \bm B_{i1} )^{-1} \bm B_{i1}^{\prime} \bm V_i^{-1} (\bm Y_i - \bm S_i \widehat \beta - \sum_{k=2}^{p}\bm B_{ik} \widehat \gamma_k^{(0)}) -\gamma_{i1}^{*} \|_2^2 \\
	& = \| (\bm B_{i1}^{\prime} \bm V_i^{-1} \bm B_{i1} )^{-1} \bm B_{i1}^{\prime} \bm V_i^{-1} (\bm S_i(\beta-\widehat \beta) + f_{i1} + \sum_{k=2}^{p} (f_{ik} - \bm B_{ik}\widehat \gamma_k^{(0)}) + \bm Z_i b_i +\varepsilon_i) -\gamma_{i1}^{*}\|_2^2 \\
	&  \leq \| (\bm B_{i1}^{\prime} \bm V_i^{-1} \bm B_{i1} )^{-1} \bm B_{i1}^{\prime} \bm V_i^{-1} \bm S_i (\beta- \widehat \beta)\|^2_2+\|(\bm B_{i1}^{\prime} \bm V_i^{-1} \bm B_{i1} )^{-1} \bm B_{i1}^{\prime} \bm V_i^{-1}f_{i1} -\gamma_{i1}^{*} \|_2^2 \\
	& \quad +  \|  (\bm B_{i1}^{\prime} \bm V_i^{-1} \bm B_{i1} )^{-1} \bm B_{i1}^{\prime} V_i^{-1} \sum_{k=2}^{p} (f_{ik} - \bm B_{ik}\widehat \gamma_k^{(0)})\|_2^2 + \|(\bm B_{i1}^{\prime} \bm V_i^{-1} \bm B_{i1} )^{-1} \bm B_{i1}^{\prime} \bm V_i^{-1} ( \bm Z_i b_i +\bm \varepsilon_i)\|_2^2 \\
	& \triangleq I_1 + I_2 + I_3 + I_4.
\end{align*}
For $I_1$, it is obviously that
\begin{align*}
	& E(I_1) =  E(\|(\bm B_{i1}^{\prime} \bm V_i^{-1} \bm B_{i1} )^{-1} \bm B_{i1}^{\prime} \bm V_i^{-1} S_i (\beta- \widehat \beta) \|_2^2) \\
	& = E((\beta-\widehat \beta)^{\prime} \bm S_i^{\prime} \bm V_i^{-1} \bm B_{i1} (\bm B_{i1}^{\prime} \bm V_i^{-1} \bm B_{i1} )^{-1} (\bm B_{i1}^{\prime} \bm V_i^{-1} \bm B_{i1} )^{-1} \bm B_{i1}^{\prime} \bm V_{i}^{-1} \bm S_i (\beta - \widehat \beta) )\\
	& = tr\{ E\{(\beta-\widehat \beta)^{\prime} \bm S_i^{\prime} \bm V_i^{-1} \bm B_{i1} (\bm B_{i1}^{\prime} \bm V_i^{-1} \bm B_{i1} )^{-1} (\bm B_{i1}^{\prime} \bm V_i^{-1} \bm B_{i1} )^{-1} \bm B_{i1}^{\prime} \bm V_{i}^{-1} \bm S_i (\beta - \widehat \beta) \} \} \\
	& = tr\{ E\{ \bm S_i (\beta - \widehat \beta) (\beta - \widehat \beta)^{\prime} \bm S_i^{\prime} \bm V_i^{-1} \bm B_{i1} (\bm B_{i1}^{\prime} \bm V_i^{-1} \bm B_{i1} )^{-1} (\bm B_{i1}^{\prime} \bm V_i^{-1} \bm B_{i1} )^{-1} \bm B_{i1}^{\prime} \bm V_{i}^{-1} \} \} \\
	& \leq \lambda_{max}\{\bm S_i E((\beta - \widehat \beta) (\beta - \widehat \beta)^{\prime} )\bm S_i^{\prime}\}
	 tr\{ \bm V_i^{-1} \bm B_{i1} (\bm B_{i1}^{\prime} \bm V_i^{-1} \bm B_{i1} )^{-1} (\bm B_{i1}^{\prime} \bm V_i^{-1} \bm B_{i1} )^{-1} \bm B_{i1}^{\prime} \bm V_{i}^{-1} \} \\
	& =O(J_n/n_i).
\end{align*}
Following Lemma A1 of \cite{zhu2008asymptotics}, since $\lambda_{max}\{\bm S_i E((\beta - \widehat \beta) (\beta - \widehat \beta)^{\prime} )\bm S_i^{\prime}\} < C_1$, $\lambda_{max}(\bm B_{i1}^{\prime} \bm V_i^{-1} \bm V_i^{-1} \bm B_{i1}) <C_2$, and $\lambda_{max}((\bm B_{i1}^{\prime} \bm V_i^{-1} \bm B_{i1} )^{-1})<C_3$, by defining $F_{i1} = \frac{1}{n_i} \bm B_{i1}^{\prime} \bm V_i^{-1} \bm B_{i1}$, we have $\| F_{i1}^{-1}\|_{\infty} = O(J_n)$. According to Lemma 6.5 of \cite{shen1998local}, we obtain the above inequalities, thus $I_1 = O_p(J_n/n_i)$.\\
For $I_2$, we have
\begin{align*}
I_2 &= 	\|(\bm B_{i1}^{\prime} \bm V_i^{-1} \bm B_{i1} )^{-1} \bm B_{i1}^{\prime} \bm V_i^{-1}f_{i1} -\gamma_{i1} \|_2^2 \\
& = \| (\bm B_{i1}^{\prime} \bm V_i^{-1} \bm B_{i1} )^{-1} \bm B_{i1}^{\prime} \bm V_i^{-1} (f_{i1}^{*} +f_{i1} - f_{i1}^{*}) - \gamma_{i1}^{*} )\|_2^2 \\
& = \| (\bm B_{i1}^{\prime} \bm V_i^{-1} \bm B_{i1} )^{-1} \bm B_{i1}^{\prime} \bm V_i^{-1} (f_{i1} - f_{i1}^{*})\|_2^2 \\
& = tr\{ (f_{i1} - f_{i1}^{*})^{\prime}  \bm V_i^{-1} \bm B_{i1} (\bm B_{i1}^{\prime} \bm V_i^{-1} \bm B_{i1} )^{-1} (\bm B_{i1}^{\prime} \bm V_i^{-1} \bm B_{i1} )^{-1} \bm B_{i1}^{\prime} \bm V_i^{-1} (f_{i1} - f_{i1}^{*})\} \\
& \leq \lambda_{max} (\bm V_i^{-1} \bm B_{i1} (\bm B_{i1}^{\prime} \bm V_i^{-1} \bm B_{i1} )^{-1} (\bm B_{i1}^{\prime} \bm V_i^{-1} \bm B_{i1} )^{-1} \bm B_{i1}^{\prime} \bm V_i^{-1}) trace\{ (f_{i1} - f_{i1}^{*}) (f_{i1} - f_{i1}^{*})^{\prime} \} \\
& = \lambda_{max} (\bm V_i^{-1} \bm B_{i1} (\bm B_{i1}^{\prime} \bm V_i^{-1} \bm B_{i1} )^{-1} (\bm B_{i1}^{\prime} \bm V_i^{-1} \bm B_{i1} )^{-1} \bm B_{i1}^{\prime} \bm V_i^{-1}) \| f_{i1} - f_{i1}^{*}  \|_2^2.
\end{align*}
Since $\lambda_{max} (\bm V_i^{-1} \bm B_{i1} (\bm B_{i1}^{\prime} \bm V_i^{-1} \bm B_{i1} )^{-1} (\bm B_{i1}^{\prime} \bm V_i^{-1} \bm B_{i1} )^{-1} \bm B_{i1}^{\prime} \bm V_i^{-1}) < C$,following from the result of \cite{de1978practical}, we know that $\| f_{i1} - f_{i1}^{*}  \|_2^2= J_n^{-2r}$, and then $I_2 = O_p(J_n^{-2r})$. \\
Next, for $I_3$,
\begin{align*}
	&E(I_3)  = E\{ \| (\bm B_{i1}^{\prime} \bm V_i^{-1} \bm B_{i1} )^{-1} \bm B_{i1}^{\prime} \bm V_i^{-1}  \sum_{k=2}^{p} (f_{ik} - \bm B_{ik}\widehat \gamma_k^{(0)}) \|_2^2\} \\
	& = tr\{ E\{ \sum_{k=2}^{p} (f_{ik} - \bm B_{ik}\widehat \gamma_k^{(0)})^{\prime} \bm V_i^{-1} \bm B_{i1} (\bm B_{i1}^{\prime} \bm V_i^{-1} \bm B_{i1} )^{-1} (\bm B_{i1}^{\prime} \bm V_i^{-1} \bm B_{i1} )^{-1} \bm B_{i1}^{\prime} \bm V_i^{-1} \sum_{k=2}^{p} (f_{ik} - \bm B_{ik}\widehat \gamma_k^{(0)}) \}\} \\
	& = tr\{ E\{ \sum_{k=2}^{p} (f_{ik} - \bm B_{ik}\widehat \gamma_k^{(0)}) \sum_{k=2}^{p} (f_{ik} - \bm B_{ik}\widehat \gamma_k^{(0)})^{\prime}\}  \bm V_i^{-1} \bm B_{i1} (\bm B_{i1}^{\prime} \bm V_i^{-1} \bm B_{i1} )^{-1} (\bm B_{i1}^{\prime} \bm V_i^{-1} \bm B_{i1} )^{-1} \bm B_{i1}^{\prime} \bm V_i^{-1} \} \\
	& \leq \lambda_{max}( \sum_{k=2}^{p} (f_{ik} - \bm B_{ik}\widehat \gamma_k^{(0)}) \sum_{k=2}^{p} (f_{ik} - \bm B_{ik}\widehat \gamma_k^{(0)})^{\prime}) \\
	& \quad \times tr\{ \bm V_i^{-1} \bm B_{i1} (\bm B_{i1}^{\prime} \bm V_i^{-1} \bm B_{i1} )^{-1} (\bm B_{i1}^{\prime} \bm V_i^{-1} \bm B_{i1} )^{-1} \bm B_{i1}^{\prime} \bm V_i^{-1}\} \\
	& = \lambda_{max}( \sum_{k=2}^{p} (f_{ik} - \bm B_{ik}\widehat \gamma_k^{(0)}) \sum_{k=2}^{p} (f_{ik} - \bm B_{ik}\widehat \gamma_k^{(0)})^{\prime}) \lambda_{max}( \bm B_{i1}^{\prime} \bm V_i^{-1} \bm V_i^{-1} \bm B_{i1} ) \\
	& \quad \times \lambda_{max}((\bm B_{i1}^{\prime} \bm V_i^{-1} \bm B_{i1} )^{-1}) tr\{n_i^{-1} F_i^{-1}\} \\
	& = O(J_n/n_i)
\end{align*}
Similarly to the proof of $I_1$, since $\| F_{i1}^{-1}\|_{\infty}=O(J_n)$ and $\lambda_{max}( \sum_{k=2}^{p} (f_{ik} - \bm B_{ik}\widehat \gamma_k^{(0)}) \sum_{k=2}^{p} (f_{ik} - \bm B_{ik}\widehat \gamma_k^{(0)})^{\prime}) ,\lambda_{max}( \bm B_{i1}^{\prime} \bm V_i^{-1} \bm V_i^{-1} \bm B_{i1} ),\lambda_{max}((\bm B_{i1}^{\prime} \bm V_i^{-1} \bm B_{i1} )^{-1})$ are bounded away from zero to infinity, we can obtain $I_3=O_p(J_n/n_i)$.

Furthermore, for $I_4$, by Lemma \ref{lem4} and the bounded assumption on the eigenvalues of $V$, it is easy to verify that there exist two constants $0<C_1 \leq C_2 < \infty$, such that
$$
  C_1 \frac{1}{ n_i^2} E((\bm Z_i b_i +\bm \varepsilon_i)^{\prime} \bm V_i^{-1} \bm B_{i1} \bm B_{i1}^{\prime} \bm V_i^{-1} (\bm Z_i b_i +\bm \varepsilon_i)) \leq   E(I_4),$$
 $$  E(I_4) \leq  C_2 \frac{1}{ n_i^2} E((\bm Z_i b_i +\bm \varepsilon_i)^{\prime} \bm V_i^{-1} \bm B_{i1} \bm B_{i1}^{\prime} \bm V_i^{-1} (\bm Z_i b_i +\bm \varepsilon_i)).$$
According to the operation properties of the trace and expectation, we have
\begin{align*}
& E((\bm Z_i b_i +\bm \varepsilon_i)^{\prime} \bm V_i^{-1} \bm B_{i1} \bm B_{i1}^{\prime} \bm V_i^{-1} (\bm Z_i b_i +\varepsilon_i)) = tr\{ E((\bm Z_i b_i +\bm \varepsilon_i)^{\prime} \bm V_i^{-1} \bm B_{i1} \bm B_{i1}^{\prime} \bm V_i^{-1} (\bm Z_i b_i +\varepsilon_i))\}	\\
&= E(tr\{ \bm B_{i1}^{\prime} \bm V_i^{-1} (\bm Z_i b_i +\varepsilon_i) (\bm Z_i b_i +\bm \varepsilon_i)^{\prime} \bm V_i^{-1} \bm B_{i1} \})= E(tr\{ \bm B_{i1}^{\prime} \bm V_i^{-1} \bm \Sigma_{i} \bm V_i^{-1} \bm B_{i1}\}) \\
& = tr\{ E(\bm B_{i1}^{\prime} \bm V_i^{-1} \bm \Sigma_{i} \bm V_i^{-1} \bm B_{i1})\} = O_p(n_i J_n).
\end{align*}
Hence, we obtain $I_4= O_p(J_n/n_i)$.\\
Consequently, combining the $I_1,I_2,I_3,I_4$, we have
$$\| \widehat \gamma_{i1}^{(1)} - \gamma_{i1}\|_2^2 \leq \| \widehat \gamma_{i1}^{(1)} - \gamma_{i1}^{*}\|_2^2 +\|\gamma_{i1}^{*}-\gamma_{i1}\|_2^2=O_p(J_n/n_i + J_n^{-2r}).$$
This finishes the proof.
\end{proof}

\begin{lemma}
\label{lem5}
If $log c/log n \rightarrow 0$ as $n_0 \rightarrow \infty$, then we have $$\hat m_j \stackrel{P}{\rightarrow} m_j,$$ for $j=1,\ldots,p$, where $\hat m_j$ is the number of groups selected by BIC, $m_j$ is the true number of groups on covariate $X_j$, and $c= \mathop{max} \limits_{\substack{j=1,\ldots,p\\ k=1,\ldots, \hat m_j}}|\Omega_{k,j}|$.
\end{lemma}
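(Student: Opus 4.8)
The plan is to establish BIC consistency by the classical two-sided comparison, ruling out underfitting ($m<m_j$) and overfitting ($m>m_j$) separately. Fix a covariate index $j$ and, for a candidate number of groups $m$, write $\mathrm{BIC}_j(m)=-2\log\hat L_j(m)+\log(n)\,m(J_n+d+1)$, where $\hat L_j(m)$ is the Gaussian working likelihood of the model refitted with $m$ group-specific spline vectors and $m(J_n+d+1)$ is the associated parameter count (the dimension of the nuisance $\tilde{\bm\theta}$ is fixed and absorbed into the constant). Profiling out the variance gives $-2\log\hat L_j(m)=N\log\{RSS_j(m)/N\}+\mathrm{const}$, where $RSS_j(m)=\min\sum_{s=1}^m\sum_{i\in\mathcal P_s}\sum_t(W_{itj}-\bm B_{itj}'\bar{\gamma}_s)^2$ is the residual sum of squares of the best $m$-group fit, a monotone decreasing function of $m$. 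Since $\hat m_j$ minimizes $\mathrm{BIC}_j$, it suffices to prove $P(\mathrm{BIC}_j(m)>\mathrm{BIC}_j(m_j))\to1$ for every fixed $m\neq m_j$. The engine of both cases is Lemma \ref{lem4}: as $n_0\to\infty$ the estimates $\hat\gamma_{ij}$ concentrate, uniformly in $i$, in balls of vanishing radius around the $m_j$ distinct true centers $\gamma_{1j},\dots,\gamma_{m_jj}$, which by the identifiability convention $E[f_j(\cdot)]=0$ and distinctness of the group functions are separated by some $\delta_j=\min_{k\neq l}\|\gamma_{kj}-\gamma_{lj}\|_2>0$.

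For underfitting ($m<m_j$), any $m$-partition is forced to fit data generated from at least two genuinely different additive functions $f_{kj}\neq f_{lj}$ with a single spline component. By the concentration above and Assumption (A4), under which each true group contains $N_{gj}=\Theta(N)$ observations, this merging produces an irreducible excess residual sum of squares bounded below by $c_0N\delta_j^2$ for some $c_0>0$ with probability tending to one. Consequently $N\log\{RSS_j(m)/RSS_j(m_j)\}\to+\infty$ at rate $\Theta(N)$, which dominates the penalty reduction $\log(n)(m_j-m)(J_n+d+1)=O(J_n\log n)$ because $J_n=O(n_0^{\varsigma})$ with $\varsigma<1$ forces $J_n\log n=o(N)$. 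Hence $\mathrm{BIC}_j(m)-\mathrm{BIC}_j(m_j)\to+\infty$ and underfitting is asymptotically excluded.

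For overfitting ($m>m_j$), the $m_j$-group model is nested in the $m$-group model, so the extra clusters merely refine the true partition and the improvement in fit is pure estimation noise. Using Lemma \ref{lem4} again, the within-true-group dispersion of the $\hat\gamma_{ij}$ is $O_p(J_n/n_0+J_n^{-2r})$, so the gain $N\log\{RSS_j(m_j)/RSS_j(m)\}$ is stochastically of smaller order than the extra penalty $\log(n)(m-m_j)(J_n+d+1)\to\infty$. The delicate point is that the selected refinement is data-dependent, so the gain must be controlled \emph{uniformly} over all spurious sub-partitions of the true clusters; bounding this supremum requires a union bound whose entropy contribution is governed by the cluster sizes and enters as a factor of order $\log c$, where $c=\max_{j,k}|\Omega_{k,j}|$. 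This is precisely where the hypothesis $\log c/\log n\to0$ is invoked: it keeps the combinatorial fluctuation at order $o(\log n)$, so the $\log n$ penalty wins and $\mathrm{BIC}_j(m)>\mathrm{BIC}_j(m_j)$ with probability tending to one.

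Combining the two cases over the finite candidate set for $m$ yields $P(\hat m_j=m_j)\to1$, and since $p$ is fixed a final union bound over $j=1,\dots,p$ gives $\hat m_j\stackrel{P}{\rightarrow}m_j$ for all $j$ simultaneously. I expect the overfitting step to be the main obstacle: the underfitting bound is a deterministic $\Theta(N)$ separation effect, whereas in the overfitting case one must show that the best spurious split cannot mimic a real signal, i.e. that the maximal fit improvement over all refinements of the true partition stays below the $\log n$ penalty, and it is this uniform control — underwritten by $\log c/\log n\to0$ — that carries the weight of the argument.
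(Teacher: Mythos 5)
Your overall architecture is the same as the paper's: both proofs compare $\mathrm{BIC}$ at a candidate $m$ against $\mathrm{BIC}$ at the truth $m_j$, and dispose of underfitting and overfitting separately. The engines differ, though. For underfitting, the paper simply cites Lemma~1 of \cite{zhang2021generalized} to assert that the $-2\log\hat L$ term diverges at rate $n_0$ when members of distinct true clusters are merged; you instead derive the divergence from first principles (separation $\delta_j$ of the true coefficient vectors, the concentration of Lemma~\ref{lem4}, and Assumption~(A4)), obtaining an explicit $\Theta(N\delta_j^2)$ excess that dominates the $O(J_n\log n)$ penalty relief; this is more self-contained. For overfitting the routes genuinely diverge: the paper claims, citing \cite{donoho2004higher}, that the likelihood-ratio statistic $-2(\hat l(m_j)-\hat l(\hat m_j))$ has a $\chi^{2}$ limit, hence is $O_p(1)$ and is beaten by the diverging penalty $\log(n)\,df\,(\hat m_j-m_j)$; you instead bound the fit gain uniformly over all data-dependent refinements of the true partition via a union bound, and this is where you spend the hypothesis $\log c/\log n\to 0$. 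A real virtue of your route is that it gives that hypothesis something to do: the paper's proof never invokes $\log c/\log n\to 0$ anywhere, so the lemma's stated condition is left unexplained there, whereas in your argument it is load-bearing.

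That said, the load-bearing step is also your soft spot. You assert that the union bound's entropy ``enters as a factor of order $\log c$,'' but a crude count of the ways to split a true cluster of $c$ subjects into two spurious subclusters is $2^{c-1}$, giving an entropy contribution of order $c$, not $\log c$. To make this rigorous you would have to restrict the supremum to partitions actually realizable by the $k$-means step (e.g.\ partitions induced by separating hyperplanes of the fitted coefficient vectors in $\mathbb{R}^{J_n+d+1}$, whose number is polynomial in the cluster size with exponent of order $J_n$), and then verify that the resulting log-cardinality, roughly $O(J_n\log c)$, is still $o(\log n)$ --- which does not follow from $\log c/\log n\to 0$ alone once $J_n\to\infty$, and needs an extra condition tying $J_n$, $c$ and $n$ together. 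So as written the overfitting half of your proof has a quantitative gap. To be fair, the paper's own treatment of that half (a $\chi^{2}$ limit for a likelihood ratio in an overfitted, non-identifiable model, supported by a citation that does not address this setting) is no more rigorous, so your proposal is at worst on equal footing, and it is the only one of the two that accounts for the role of $c$ in the statement.
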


\begin{remark}
Lemma 5 implies that when the true number of groups $m_j$ is unknown, BIC can be utilized to determine the number of clusters and the estimated number of groups $\hat m_j$ selected by BIC converges to $m_j$ in probability. This is consistent with some findings for BIC in tuning parameter determinations, for instance, \cite{zhang2010regularization} and \cite{bai2018consistency}.
\end{remark}

\begin{proof}[Proof of Lemma \ref{lem5}]
If $\hat m_j < m_j$, then we can find at least one pair of $\gamma_i$ and $\gamma_{i ^ \prime}$, such that they are not in the same cluster but they are grouped to the same cluster.
By the Lemma 1 of \cite{zhang2021generalized}, for $$BIC(\hat m_j)= -2 log(\hat L) + log(n) \times \hat m_j \times df,$$
where $df$ represents the degree of freedom of B-splines, the first item on the right-hand side goes to $\infty$ with rate $n_0$. It is faster than the rate of BIC under $\hat m_j = m_j$, implying that $P(\hat m_j < m_j)=0$ as $n_0 \rightarrow \infty$. Therefore, we only need to prove that $\hat m_j > m_j$ leads to $BIC(\hat m_j) > BIC( m_j)$.
Most of time, the degree of freedom of B-splines is not less than 5, and it suffices to show that for
\begin{align*}
P(BIC(m_j)<BIC(\hat m_j))  = P(-2(\hat l(m_j) - \hat l(\hat m_j))<log(n) \times df \times (\hat m_j- m_j))
\end{align*}
according to \cite{donoho2004higher}, the limiting distribution of $-2(\hat l(m_j) - \hat l(\hat m_j))$ is $\chi^{2}$ distribution. The right-hand side of above formula is tending to 1 as $n \rightarrow \infty$. Hence we conclude that $P(BIC(m_j)<BIC(\hat m_j)) \rightarrow 1$ when $\hat m_j > m_j$, indicating that $\hat{m}_{j} \stackrel{P}{\rightarrow} m_{j}$. This finishes the proof.
\end{proof}

\begin{proof}[Proof of Theorem \ref{th1}]
When the true subgroup memberships $\Omega_{1,j},\ldots,\Omega_{m_j,j},j=1,\ldots,p$ are known, the oracle approximation for $\tilde \gamma_j$ can be easily obtained.
Taking $j=1$ as an example, for $i \in \Omega_{1,j}$, we have
$$\tilde \gamma_{11} = (\bm B_1^{\prime} \bm V_1^{-1} \bm B_1)^{-1} \bm B_1^{\prime} \bm V_1^{-1} (\bm Y_1 - \bm S_1 \widehat \beta - \sum_{j=2}^{p} \bm B_j \gamma_j^{(0)}),$$
where $\bm B_1 = (B_{11}^{\prime},\ldots,B_{N_{1,1}1}^{\prime})^{\prime}$ is the B-spline design matrix, $\bm V_1= diag(V_{11},\ldots,V_{N_{1,1}1})$ is the working correlation matrix of whom belonging to $\Omega_{1,1}$.

Similarly to the proof of Lemma \ref{lem4}, we can show that under Assumptions (A1)-(A5),
$$\| \tilde \gamma_j - \gamma_j\|_2^2 \leq O_p( J_n/N_{0}+ J_n^{-2r}),j=1,\ldots,p.$$
This finishes the proof.
\end{proof}

\begin{proof}[Proof of Theorem \ref{th2}]
When the algorithm converges, which means when $\widehat \Omega_j^{(n)} = \widehat \Omega_j^{(n-1)}$ for $\forall j=1,\ldots,p$, the proposed method stops. Assuming that for covariate $X_j$, there are $\hat m_j$ groups being detected, $\widehat \Omega_{1,j},\ldots,\widehat \Omega_{\hat m_j,j}$. And the number of membership of each subgroup are $n_{1,j},\ldots,n_{\hat m_j,j}$ respectively, where $\sum_{k=1}^{\hat m_j} n_{k,j} = n$.\\
Let $\bm B_{1,j},\ldots, \bm B_{\hat m_j,j}$ denote the B-spline matrix of the $\widehat \Omega_{1,j},\ldots, \widehat \Omega_{\hat m_j,j}$ group respectively, and $\bm S_{1,j},\ldots,\bm S_{\hat m_j,j}$ denote the design matrix for the individuals of group $\widehat \Omega_{1,j},\ldots, \widehat \Omega_{\hat m_j,j}$. \\
So $\widehat \gamma_1.\ldots, \widehat \gamma_p$, the estimates of $\gamma_1,\ldots,\gamma_p$ after convergence are obtained as
$$\widehat \gamma_{k,j}= (\bm B_{k,j}^{\prime} \bm V_{k,j}^{-1} \bm B_{k,j})^{-1} \bm B_{k,j}^{\prime} \bm V_{k,j}^{-1} (y_{k,j} -\bm S_{k,j} \widehat \beta - \sum_{l \neq j} \bm B_{k,l} \widehat \gamma_{k,l}),\quad k=1,\ldots,\hat m_j,j=1,\ldots,p$$
Similarly to the proof the Lemma \ref{lem4}, we can easily conclude that
$$\| \widehat \gamma_{k,j} -\gamma_{k,j} \|_2^2 = O_p(J_n/n_{0k} +J_n^{-2r}),$$
where $n_{0k} = min \{n_i, i \in \widehat \Omega_{k,j} \}$.
As a result, we have $$\| \widehat \gamma_{j} -\gamma_{j} \|_2^2 = O_p(J_n/n_{0} +J_n^{-2r}),$$
which leads to $$\| \widehat f_j - f_j \|_2^2 = O_p(J_n/n_0 +J_n^{-2r}).$$
This finishes the proof.
\end{proof}

\begin{proof}[Proof of Theorem \ref{th3}]
According to Theorem 1 of \cite{abraham2003unsupervised}, if $n$ is sufficiently large, as long as $\| \widehat \gamma_{ij}^{(1)} - \gamma_{ij}\|_2^2 \rightarrow 0$, for the K-means procedure, the unique minimizer of objective function $$\mu_n = \sum_{j=1}^{k} \sum_{i \in G_j} \| \widehat \gamma_i - c_j\|_2^2$$ is arbitrarily close to the unique minimizer of objective function
$$\mu = \sum_{j=1}^{k} \sum_{i \in G_j} \| \gamma_i - c_j\|_2^2.$$
With Theorem 1 of \cite{abraham2003unsupervised} ensuring the consistency of K-means, we can establish that as $n \rightarrow \infty$ and $\| \widehat \gamma_{ij}^{(\infty)} - \gamma_{ij}\|_2^2 \rightarrow 0$, the proposed method can identify the true subgroup structure with probability tending to 1. Hence, we have $$P(\widehat \Omega_{j} = \Omega_{j}) \rightarrow 1.$$
This finishes the proof.
\end{proof}

\end{document}